\newcommand{\qvdots}{
  \raisebox{0.3em}{\ensuremath{\vdots}}%
}
\DeclareRobustCommand{\[}{\begin{equation}}
\DeclareRobustCommand{\]}{\end{equation}}
\newcommand{\id}{\ensuremath{\mathds{1}}}
\declaretheoremstyle[
    headfont=\bfseries, 
    bodyfont=\normalfont,
    headpunct={.},
    spacebelow=\parsep,
    spaceabove=\parsep,
    mdframed={
      roundcorner=10pt,
      linecolor=quantumviolet, 
      linewidth=1pt,
        innertopmargin=6pt,
        innerbottommargin=6pt, 
        skipabove=3ex, 
        skipbelow=3ex,
       } 
]{framedstyle}
\declaretheoremstyle[
    headfont=\bfseries, 
    bodyfont=\normalfont,
        headpunct={},
    spacebelow=\parsep,
    spaceabove=\parsep,
    mdframed={
      roundcorner=10pt,
      linecolor=quantumviolet, 
      linewidth=1pt,
        innertopmargin=6pt,
        innerbottommargin=6pt, 
        skipabove=3ex, 
        skipbelow=3ex,
       } 
]{unnamedstyle}
\declaretheorem[style=framedstyle,name=Theorem]{theorem}
\declaretheorem[style=framedstyle,name=Lemma, numberlike=theorem]{lemma}
\declaretheorem[style=framedstyle,name=Definition, numberlike=theorem]{definition}
\declaretheorem[style=framedstyle,name=Corollary, numberlike=theorem]{corollary}
\declaretheorem[style=framedstyle,name=Observation, numberlike=theorem]{observation}
\declaretheorem[style=framedstyle,name=Remark, numberlike=theorem]{remark}
\declaretheorem[style=framedstyle,name=Example, numberlike=theorem]{example}
\declaretheorem[name=Problem]{problem}
\declaretheorem[style=unnamedstyle,name=, numbered=no]{nonamethm}
\renewcommand\tableofcontents{%
    \@starttoc{toc}%
}
\newcommand{\ancillapure}{\gamma}
\newcommand{\ancillaket}{\ket{\ancillapure}}
\newcommand{\ancillaproj}{\ketbra{\ancillapure}}
\newcommand{\ancillamixed}{\tau}
\title{The Hadamard gate cannot be replaced
by a resource state in universal quantum computation }
\author{Benjamin D.M. Jones}
\affiliation{H. H. Wills Physics Laboratory, University of Bristol, Bristol, BS8 1TL, UK.}\affiliation{School of Mathematics, University of Bristol, Fry Building, Woodland Road, Bristol, BS8 1UG, UK.}\affiliation{Quantum Engineering Centre for Doctoral Training,  University of Bristol, Bristol, BS8 1FD
UK.}
\author{Noah Linden}
\affiliation{School of Mathematics, University of Bristol, Fry Building, Woodland Road, Bristol, BS8 1UG, UK.}
\author{Paul Skrzypczyk}
\affiliation{H. H. Wills Physics Laboratory, University of Bristol, Bristol, BS8 1TL, UK.}
\affiliation{CIFAR Azrieli Global Scholars Program, CIFAR, Toronto Canada.}
\begin{document}

\maketitle

\begin{abstract}
    We consider models of quantum computation that involve operations performed on some fixed resourceful quantum state. Examples that fit this paradigm include magic state injection and measurement-based approaches. We introduce a framework that incorporates both of these cases and focus on the role of coherence (or superposition) in this context, as exemplified through the Hadamard gate. We prove that given access to incoherent unitaries (those that are unable to generate superposition from computational basis states, e.g. CNOT, diagonal gates), classical control, computational basis measurements, and any resourceful ancillary state (of arbitrary dimension), it is not possible to implement any coherent unitary (e.g. Hadamard) exactly with non-zero probability. We also consider the approximate case by providing lower bounds for the induced trace distance between the above operations and $n$ Hadamard gates. To demonstrate the stability of this result, this is then extended to a similar no-go result for the case of using $k$ Hadamard gates to exactly implement $n>k$ Hadamard gates.
\end{abstract}

  \tableofcontents

\section{Introduction}

 \begin{figure}[b!]
        \centering 
         \begin{subfigure}[b]{0.24\textwidth}   
            \centering \label{fig:nocut}
            \includegraphics[scale=0.15, trim = 1.5cm 1.5cm 1.5cm 1.5cm ]{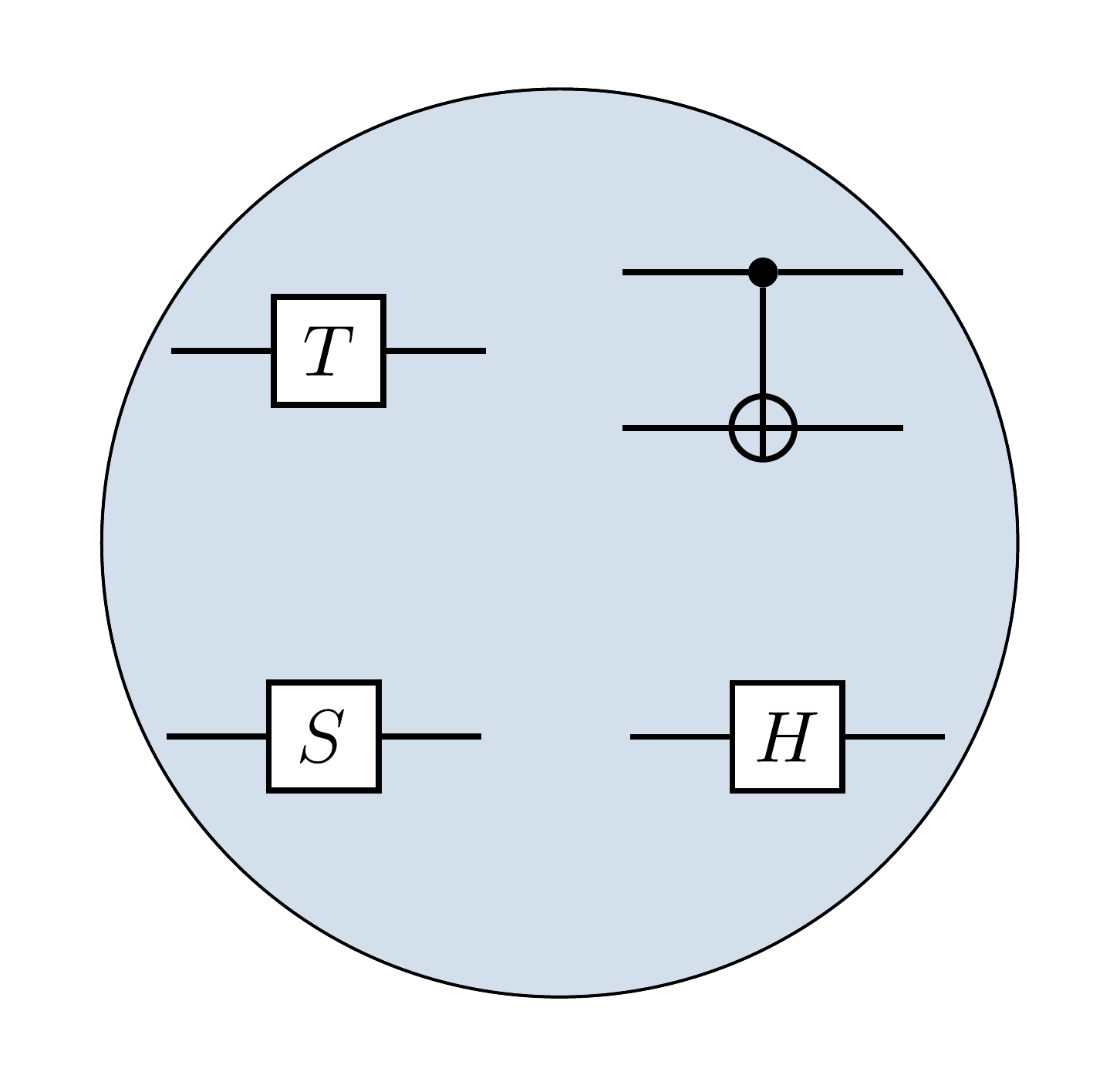}
            \caption{Universal gate set.}
        \end{subfigure}
        \hfill
         \begin{subfigure}[b]{0.24\textwidth}   
            \centering 
            \includegraphics[scale=0.15, trim = 1.5cm 1.5cm 1.5cm 1.5cm]{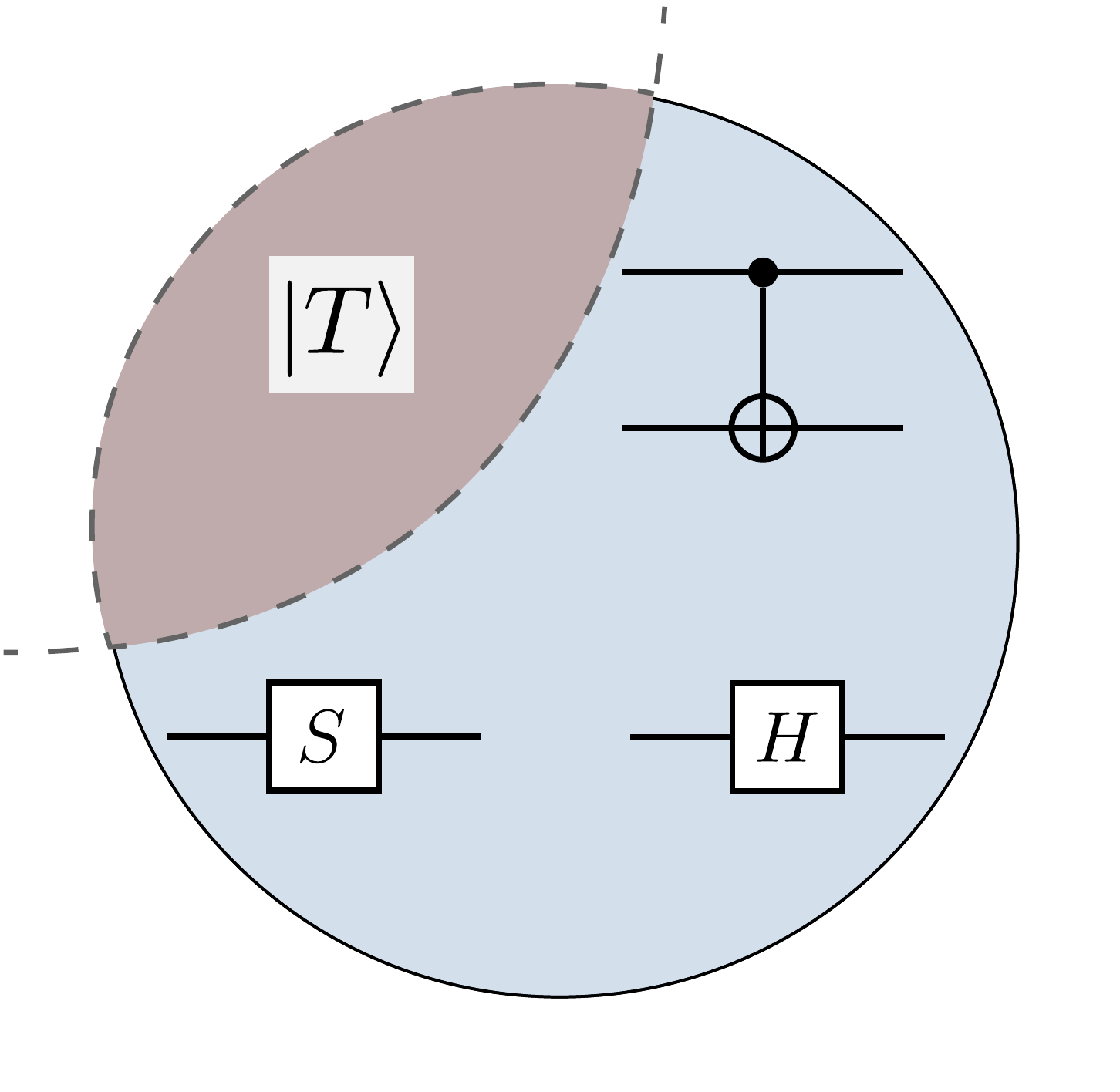}
            \caption{Magic state injection.}
        \end{subfigure}
        \hfill
         \begin{subfigure}[b]{0.24\textwidth}   
            \centering 
            \includegraphics[scale=0.15, trim = 1.5cm 1.5cm 1.5cm 1.5cm]{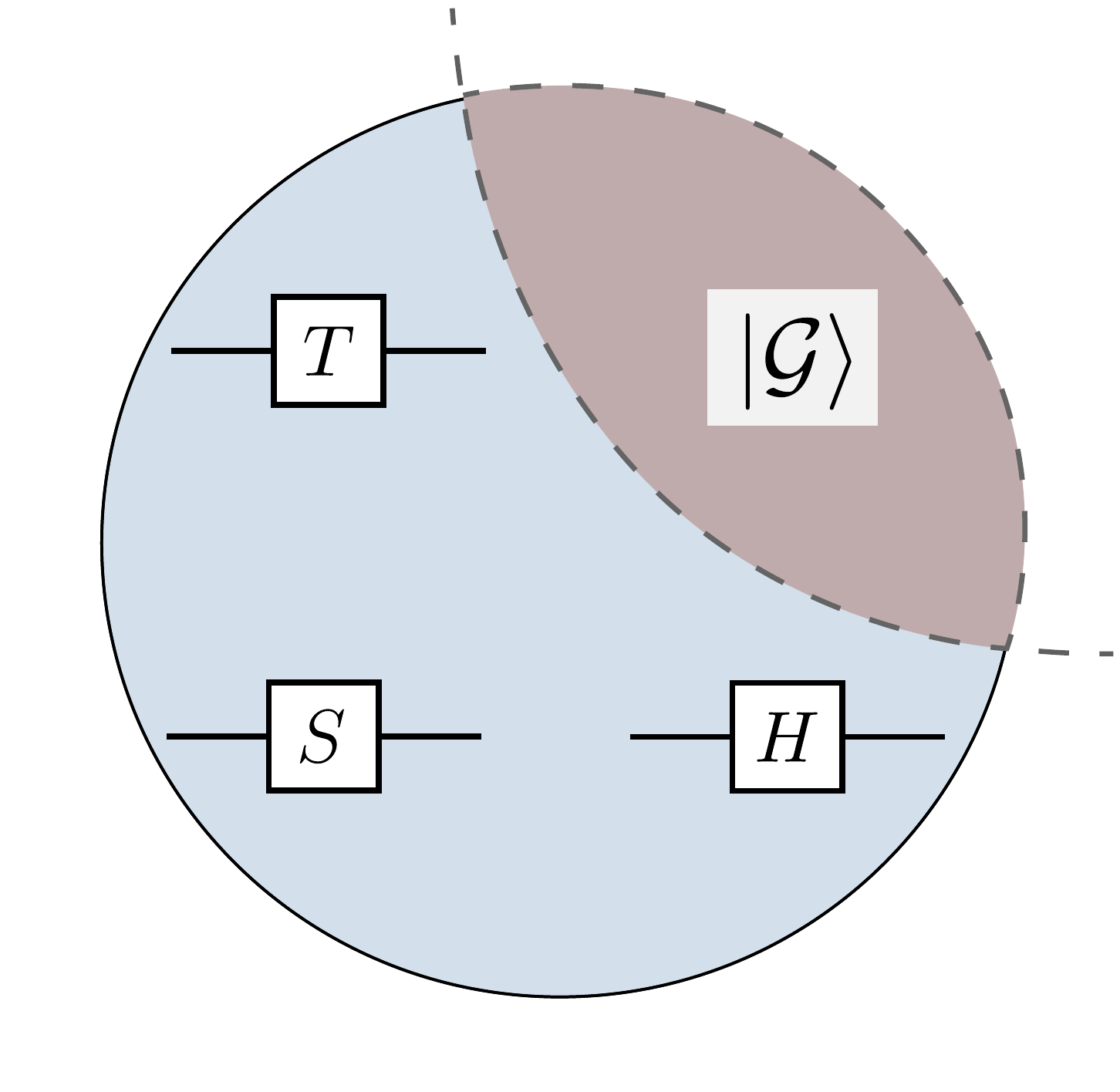}
            \caption{MBQC.}
        \end{subfigure}
        \hfill
        \begin{subfigure}[b]{0.24\textwidth}   
            \centering 
            \includegraphics[scale=0.15, trim = 1.5cm 1.5cm 1.5cm 1.5cm]{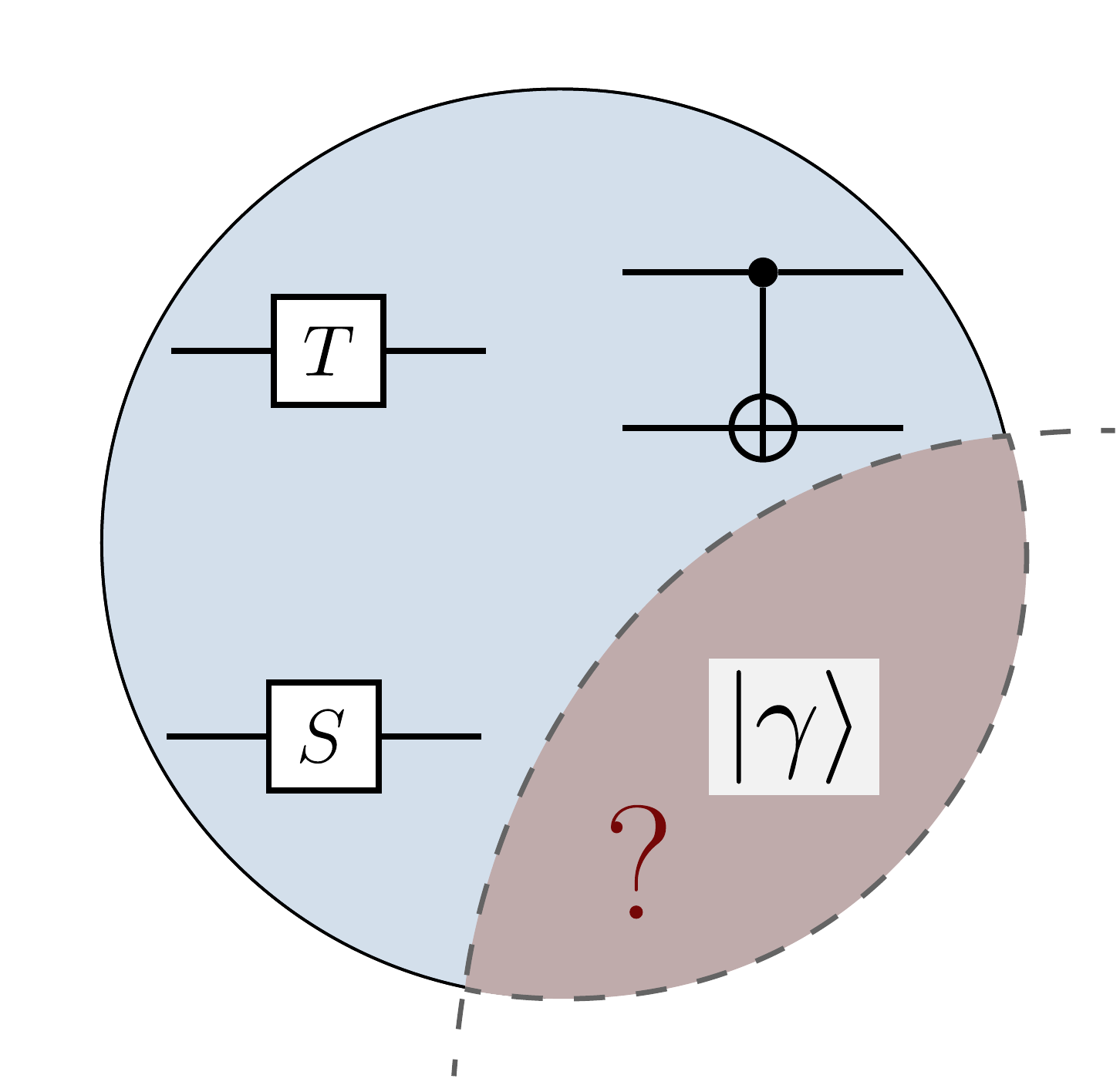}
            \caption{This work.}
        \end{subfigure}
        \caption{(a) A universal set of quantum gates: Hadamard ($H$), Phase ($S$), $T$, and controlled-NOT. (b) Each $T$ gate can be implemented using Clifford operations and a $\ket{T}$ state. (c) If CNOT is removed from this gate set one can still perform universal quantum computation using an appropriately entangled resource state $\ket{\mathcal{G}}$, such as a cluster state. (d) We ask whether one can similarly replace the Hadamard gate with some resourceful state $\ket{\gamma}$ and still achieve universality -- we show that this is not possible. In all cases we allow computational basis measurements and classical control. Also note that the case of removing $S$ is trivial as $T^2 = S$.}
        \label{fig:clifford+T cut}
    \end{figure}

The more peculiar aspects of quantum mechanics, such as entanglement \cite{horodecki2009quantum} and incompatibility of measurements \cite{guhne2021incompatible}, continue to fascinate researchers and motivate a deeper understanding of this cornerstone of physics. It is also remarkable that quantum theory appears to provide a computational speed-up for certain problems over what is possible with classical physics \cite{nielsen2002quantum}. The quest to fully understand and quantify which aspects of quantum theory are needed for useful quantum algorithms is a pressing and exciting current area of research.

There are various ways of performing universal quantum computation: examples include the circuit model \cite{nielsen2002quantum}, measurement based approaches \cite{jozsa2006introduction}, magic state injection \cite{campbell2017roads}, quantum annealing \cite{albash2018adiabatic}, and continuous variable models \cite{braunstein2005quantum}. An interesting perspective is to consider approaches involving ``free'' operations (i.e. easy to perform in some sense) acting on a resourceful state that is prepared independently of the computation. By focusing on this supplementary state, one could hope to gain insight into which components of quantum mechanics are responsible for the computational classical-quantum boundary. 

The most widely studied universal gate set is the Clifford + $T$ gate set; recall that the Clifford group is generated by the single qubit Hadamard ($H$) and phase ($S$) gates and the two-qubit controlled-NOT (CNOT) gate. The gate set of CNOT, $T$ and Hadamard is also universal, and can be thought of as respectively supplying the resources of entanglement, magic (or non-stabiliserness) and coherence (or superposition). In magic state injection (MSI), one implements a $T$ gate by performing adaptive Clifford operations on the input state and an ancillary state $\ket{T}:= T\ket{+}$. Here the operations performed are free with respect to the resource of magic, and all of the magic required is contained in the pool of $\ket{T}$ states. This approach is motivated by error correction and fault tolerance schemes \cite{campbell2017roads}. 

In contrast, measurement based quantum computation (MBQC) proceeds by adaptively performing single qubit measurements on an entangled resource state, such as a cluster state \cite{van2007fundamentals}. In this scenario, the resource of entanglement is present only in the state, and again the operations are free with respect to this resource. The ability to perform computational basis measurements (i.e. measure in the $Z$ basis) and apply $H$, $S$ and $T$ gates also implies ability to measure in the $X$, $Y$ and $TXT^\dagger$ bases, which is sufficient for universality \cite{takeuchi2019quantum}.

From these examples, a natural question arises of where we can put the `cut' between operations and states whilst retaining the ability to perform universal quantum computation -- see \cref{fig:clifford+T cut}. For example when considering the Clifford + $T$ gate set, can one replace the Hadamard with access to some resourceful state, and still maintain universality? We provide no-go results in this direction. 

In more generality one can consider whether this cut is possible for an arbitrary quantum resource theory \cite{chitambar2019quantum}. As Hadamard is the only gate within Clifford + $T$ capable of generating superpositions from computational basis states, the relevant resource theory here is that of coherence \cite{streltsov2017colloquium}. Our findings show that some coherence is required in the \emph{operations} to achieve universality, providing a stark contrast with the resource theories of magic and entanglement.

\subsection{Summary of Results}

We provide no-go results on the possibility of performing universal quantum computation using operations unable to generate superpositions, even given access to an arbitrary state. A unitary that maps at least one computational basis state to a superposition of two or more basis states is termed \textit{coherent}, otherwise it is \textit{incoherent}. Our findings can be informally summarised as:
\begin{nonamethm}
 $$\let\scriptstyle\textstyle
 \substack{\text{Incoherent} \\ \text{unitaries}} \quad + \quad\substack{\text{classical} \\ \text{control}} \quad + \quad \substack{\text{computational basis} \\ \text{measurements}}  \quad + \quad \substack{\text{arbitrary} \\ \text{ancillas}}
 $$
 ~
  $$
\text{cannot implement coherent unitaries (e.g. Hadamard).}
 $$
\end{nonamethm}

\subsubsection*{Main Conceptual Contributions}

\begin{itemize}
    \item We provide a unified framework from which to consider models of quantum computation that involve free operations acting on some fixed resourceful state.
    \item We give evidence that any model of quantum computation must involve the resource of coherence in the operations (exemplified by the Hadamard gate). That is, coherence cannot be siphoned off to some supplementary state, unlike in the cases of magic in magic state injection or entanglement in measurement based quantum computation.
\end{itemize}

\subsubsection*{Main Technical Contributions}

Recall that the dephasing map $\Delta$ sets all off-diagonal terms of the density matrix in the computational basis to zero, and the trace distance is defined as $D(\rho, \sigma) = \frac{1}{2} \norm{\rho - \sigma}_1$.

\begin{nonamethm}
    \textbf{\cref{lem:exact} (informal).} 
    If a channel $\mathcal{E}$ commutes with the dephasing map $\Delta$, then for any state $\ancillaket$ the induced channel $\rho \mapsto \mathcal{E}(\rho \otimes \ancillaproj)$ cannot implement any coherent unitary.
\end{nonamethm}

This generalises an observation made in an erratum to \cite{dana2017resource} that shows a similar result for qubits. 

Our second main technical result is a robust extension of this to the approximate case, when specifically considering $n$ Hadamard gates, of particular relevance in quantum computation.

\begin{nonamethm}
    \textbf{\cref{lem:0ton_approx} (informal).} 
    Let $\mathcal{E}$ be a channel that commutes with the dephasing map $\Delta$, let $H$ denote the Hadamard gate and $D$ denote trace distance. Then for any state $\ancillaket$ we have 
$$
\max_{\rho} ~ D \bigg (\mathcal{E} (\rho \otimes \ancillaproj) ~,~ H^{\otimes n} \rho H^{\otimes n} \bigg) \geq 1 - \frac{1}{2^n}.
$$
\end{nonamethm}

Thirdly, we show that $k$ Hadamards, incoherent unitaries, classical control and an arbitrary ancilla cannot be used to implement $n > k$ Hadamards exactly and deterministically.

\begin{nonamethm}
    
\textbf{\cref{lem:coherencerank} (informal). }
Let $U=U_k V_k \dots U_1 V_1 U_0$ be a product of unitaries, comprised of $k$ Hadamards $V_i$ and incoherent $U_i$. Then for any state $\ancillaket$ we have
$$
\text{Tr}_2\left ( U \rho \otimes \ancillaproj U^\dagger \right ) = H^{\otimes n} \rho H^{\otimes n} \qquad \forall \rho, \hspace{40pt} \implies \quad n \leq k.
$$
\end{nonamethm}
Whilst these results stand independently in the study of coherence, we interpret them in quantum computation by showing that certain operationally motivated channels satisfy the conditions. 

Supporting results include showing that quantum-controlled incoherent unitaries are incoherent (\cref{lem:controlled_incoherent}), placing bounds on the coherence rank of a state after $k$ Hadamards have been applied (\cref{lem:cohrankprod}), and proving that if the marginal of a unitary channel acting on an input state and fixed state is unitary, then the other marginal must be independent of the input state (\cref{lem:ancilla_indep}).

We now provide further background and definitions, before more formally introducing our framework in the next section.

\subsection{Background}

The seminal result of the Gottesman-Knill theorem  \cite{gottesman1997stabilizer, aaronson2004improved, jozsa2013classical, nielsen2002quantum} states that any quantum computation consisting of Clifford operations (comprised of CNOT, Hadamard and phase gates), can be simulated efficiently on a classical computer. It is known that the $T$ gate elevates this set to universality, and the Clifford + $T$ gate set is perhaps the most widely considered universal set of gates. Motivated by error-correction and fault-tolerance considerations \cite{campbell2017roads}, in place of directly applying a $T$ gate, one can perform adaptive Clifford operations on an arbitrary input state and a so-called \textit{magic state}, to implement the $T$ gate deterministically. This is often referred to as a \textit{gadget}, where one replaces all uses of a given gate with this subroutine, consuming a resourceful state in the process. See \cref{ex:msi} below for further detail here.

Another example of a gadget-based approach can be found in recent work on matchgate circuits \cite{hebenstreit2019all, hebenstreit2020computational}. Matchgates are a family of two-qubit gates, inspired by fermionic systems, that can be written as the direct sum of two single qubit gates with the same determinant, acting respectively in the even and odd parity subspaces \cite{jozsa2008matchgates}. It is known that circuits composed of matchgates acting only on nearest-neighbour qubits are classically simulable, however any family of quantum circuits can be simulated efficiently with circuits composed of matchgates acting on next-nearest-neighbour qubits \cite{terhal2002classical, jozsa2008matchgates}. Hence nearest-neighbour matchgates can be augmented to universality using SWAP gates, analogously to Clifford circuits and $T$ gates. The work of \cite{hebenstreit2019all, hebenstreit2020computational} highlights this connection (see Figure 1 in \cite{hebenstreit2020computational}), and shows the existence of a SWAP state, which can be consumed under adaptive nearest-neighbour matchgates to implement the SWAP gate. This provides a parallel gadget based approach to the Clifford + $T$ case, in which resourceful states are consumed to implement resourceful gates, enabling universality.

Measurement based-quantum computing (MBQC) \cite{jozsa2006introduction} generally refers to any model of quantum computation in which the primary allowed operations are measurements. The foremost example of this is the so-called \textit{one-way} MBQC model \cite{raussendorf2001one, nielsen2006cluster}, in which adaptive single qubit measurements are performed on some fixed resource state. This is usually taken to be a cluster state, a state in which qubits are laid out in a rectangular grid, initialised to $\ket{+}$ states, and controlled-$Z$ gates are applied between neighbouring qubits. Another model is teleportation-based quantum computation, which proceeds by using Bell measurements to teleport gates \cite{gottesman1999quantum, jozsa2006introduction}. 

The above examples are all connected: they all relate to performing some perceived free operations on an apparently resourceful fixed state. In the magic state injection model, one may consider Clifford operations as free, and the resource state contains the \textit{magic} needed for the computation. In the standard MBQC framework, local measurements are considered free (one can generalise this to consider arbitrary \textit{local operations and classical communication} (LOCC) operations \cite{van2007fundamentals}), and the resource state contains all the \textit{entanglement} needed for the computation.

The framework of \textit{quantum resource theories} \cite{chitambar2019quantum} aims to identify components of quantum theory that are non-classical in some sense, by defining so called \textit{free} sets of states, and allowed channels and measurements. One can then define resource quantifiers, such as the distance a given object is away from the free set, or finding a minimal convex combination of an object and free object. This paradigm has roots motivated by thermodynamics, and the archetypal quantum resource theory is that of entanglement. Here the free states and allowed channels can respectively taken to be separable states and LOCC. The resource theory of coherence has also gathered a lot of attention in recent years \cite{streltsov2017colloquium}, and is highly relevant to this work. In this context, the set of free states are those which are diagonal in some fixed basis (termed \textit{incoherent}), however there are multiple approaches to defining the allowed class of operations, which has lead to fruitful and nuanced discussion \cite{chitamber2016critical} -- see \cref{app:resourcecoherence} for further more on this.

When considering the computational power of a set of quantum operations, there are multiple approaches one can take. One can consider \textit{classical simulability}, namely if one can efficiently perform the same calculation on a classical computer. Here there are several subtleties: how to precisely quantify `efficiently', and the exact simulation task considered; for example the ability to sample from measuring the final state in the computational basis (weak simulation), or the ability to compute or bound a given output probability of the final state (strong simulation) - see e.g. \cite{ van2007classical, nest2008classical, xu2023herculean}. Another angle is to consider \textit{universality}, that is, the ability of the operations to implement any unitary or prepare any quantum state, with extensions including notions of approximate and probabilistic universality \cite{mora2010universal, van2007fundamentals}. These ideas are not independent: for quantum computers to be strictly more powerful than classical computers, one would expect that efficient classical simulation of a universal quantum device is not possible, however the inability to classically simulate a quantum process efficiently does in general not imply universality (for example, consider approaches to so-called `quantum computational supremacy' \cite{harrow2017quantum}).

In this work, we focus on the notion of universality. We consider the resource of coherence in gadget-based approaches to quantum computation through studying the role of the Hadamard gate. Specifically, we ask whether given access to incoherent unitaries (i.e. unitaries unable to generate superpositions when acting on computational basis states), computational basis measurements, and classical control (e.g. applying unitaries conditioned on previous measurement outcomes) if there exists a quantum state (which can be completely arbitrary) such that one can implement Hadamard gates, either exactly or approximately. To phrase this in a slightly contrived fashion and give broader motivation, suppose some distant civilisation are capable of preparing and transporting some complicated resourceful state. What are the minimal operations that are necessary for the recipient in order for them to be able to perform universal quantum computation? In this work, we will provide evidence of where this resource `cut' lies: the ability to perform coherent operations (or incompatible measurements) are all that is necessary, everything else can be moved into the resource state. Complementary results also show that the ability to perform the Hadamard gate is sufficient in this context \cite{takeuchi2019quantum}, hence we draw closer to a complete answer to this question. Along the way, we show several results that may be of broader interest in quantum information, computation, and resource theories.

We will now introduce some examples that explain the above areas in more detail, providing concrete motivation and serving as a reference for the rest of the document.

\begin{example}
\label{ex:msi}
 Consider the gate set of Clifford + $T$, comprised of gates from $\{CNOT, H, S, T\}$, where $H$ is the Hadamard gate and $S$ is the phase gate. As discussed above, the $T$ gates may be implemented by performing adaptive Clifford operations on supplementary $T$ states. A natural question is: where else could we put the `cut' between gates and states? Could it be possible to do universal quantum computation with only adaptive CNOT gates acting on some supplementary resourceful state? 

To provide a more concrete basis for this question, consider the following circuit, valid for all diagonal gates $U = \begin{pmatrix} 1 & 0 \\ 0 & e^{i\theta} \end{pmatrix}$ and qubit input $\ket{\psi} \in \mathbbm{C}^2$.
\[
  \Qcircuit @C=1em @R=2em {
\ket{\psi} &&& \ctrl{1} & \qw & \gate{U^2} & \qw && U\ket{\psi} \\
U\ket{+} &&&  \targ & \qw &\meter \cwx[-1] 
}
\]
    Magic state injection is the special case of this when $U=T$:
    
    \[
  \Qcircuit @C=1em @R=2em {
\ket{\psi} &&&& \ctrl{1} & \qw & \gate{S} & \qw && T\ket{\psi} \\
\ket{T} := T \ket{+} &&&&  \targ & \qw &\meter \cwx[-1] 
}
\]
Observe that for $U=Z$, this becomes
    
    \[
  \Qcircuit @C=1em @R=2em {
\ket{\psi} &&& \ctrl{1} & \qw  & \qw & Z\ket{\psi} \\
\ket{-} &&&  \targ & \qw &\meter
}
\]
as $Z^2 = \id$. Hence given access to CNOTs and $\ket{-}$ states we can implement the $Z$ gate deterministically. We can apply this as a subroutine, enabling us to implement the phase gate $S$ using two CNOTs and the state $\ket{-}\bigg( \frac{\ket{0} + i \ket{1}}{\sqrt{2}} \bigg )$. Iterating in this way, we can reach any gate of the form $U_k = \begin{pmatrix} 1 & 0 \\ 0 & e^{\frac{2 \pi i}{2^k}} \end{pmatrix}$ for $k \in \mathbbm{N}$ (note that $U_2 = T$). Hence for Clifford + $T$, we can replace the $S$ and $T$ gates with gadgets, and perform universal quantum computation with ability to only perform CNOTs and Hadamard on some supplementary state. However it is not clear how to restrict this gate set further when only computational basis measurements are permitted.
\end{example}

\begin{example} \label{ex:hadgadget}
    So-called Hadamard gadgets are known to exist \cite{heyfron2018efficient, Beaudrap2020FastAE, bremner2011classical}, where they play roles relating to compilation and simulation of quantum circuits. For example, the following circuits appear in \cite{heyfron2018efficient} and \cite{Beaudrap2020FastAE} respectively:
\[ \label{eq:hadgadget1}
  \Qcircuit @C=1em @R=1.5em {
\ket{\psi} &&\gate{S} & \ctrl{1} & \qw & \targ & \ctrl{1} &  \gate{X} & \qw &&  H\ket{\psi} \\
\ket{+}  &&\gate{S} & \targ & \gate{S^\dagger} &  \ctrl{-1} & \targ & \meterB{ \text{\tiny $X$}  } \cwx[-1] 
} 
\]

\[ \hspace{50pt}  \label{eq:hadgadget2}
  \Qcircuit @C=1em @R=1.5em {
\ket{+} && \ctrl{1} & \qw  &  \gate{X} & \qw &&  H\ket{\psi} \\
\ket{\psi}    & & \control  \qw & \qw  &  \meterB{  \text{\tiny $X$}  } \cwx[-1] 
} 
\]
However, they crucially rely on $X$ basis measurements, that is, measurements in the coherent basis $\{\ket{+}, \ket{-}\}$. In this work, we will show that such gadgets cannot exist if one restricts to computational basis measurements.

\end{example}

\begin{example}

In \cite{takeuchi2019quantum} it is shown that measurement-based quantum computing is possible with adaptive $X$ and $Z$ measurements alone. This can alternatively can be viewed as the ability to only perform the Hadamard gate and measure in the computational basis. It is clear that the resource state here cannot be a graph state, as graph states are stabiliser states, and as Hadamard is a Clifford gate we could simulate the whole computation using the Gottesman-Knill theorem. Indeed the state considered in \cite{takeuchi2019quantum} is a hypergraph state \cite{rossi2013quantum}, formed by initialising all qubits to $\ket{+}$ and performing multiply controlled $Z$ gates for each hyperedge. In particular, one can see that as $CCZ$ is not a Clifford operation, hypergraph states will not be stabiliser states in general.

This example shows that given the ability to only perform adaptive Hadamard gates and computational basis measurements, there exists a resourceful ancillary state such that universal quantum computation is possible.
\end{example}

\begin{example}
\label{ex:erratum}
\textit{Incoherent operations} (IO) are defined as channels admitting a Kraus decomposition $\mathcal{E}(\rho) = \sum_\alpha K_\alpha \rho K_\alpha^\dagger$ such that $K_\alpha \rho K_\alpha^\dagger$ is an incoherent state for each $\alpha$ and each incoherent state $\rho$.
 It is known that these channels supplemented with a maximally coherent state $\ket{\Psi_d} = d^{-\frac{1}{2}}\sum_{k=0}^{d-1} \ket{k} \in \mathbbm{C}^d$ are able to implement any quantum channel \cite{dana2017resource}. 
 
 It was originally claimed in the same paper that a similar result, namely the ability to implement any unitary given access to $\ket{\Psi_d}$, held for \textit{strictly incoherent operations} (SIO), which are IO with the additional property that $K_\alpha^\dagger \rho K_\alpha$ is an incoherent state for each $\alpha$ and each incoherent state $\rho$. However, it was later shown in an erratum to \cite{dana2017resource} that their proof was invalid as the operations used were not SIO. In this erratum, the authors gave a simple argument that if a qubit channel commutes with the dephasing map (which all SIO do), then even supplemented with an arbitrary ancilla one cannot implement any coherent unitary.

 This example highlights an interesting distinction: IO can `unlock' the resource in a supplementary state, whereas the slightly weaker class of SIO are unable to access any of this state resource. In this work we show that a class of operations motivated by quantum computation gadgets are also unable to harness the power in a supplementary coherent state. See \cref{app:resourcecoherence} for further background on the resource theory of coherence, and how our work relates to this topic.
\end{example}

\begin{example}
    In \cite{shi2002both}, it is shown that CNOT and any single qubit gate whose square is basis changing (i.e. coherent) is universal for quantum computation. The same result is also shown for the Toffoli gate and any single qubit basis changing gate. A simpler proof for the case of Toffoli + Hadamard was presented in \cite{aharonov2003simple}. These results are conceptually fascinating as the Toffoli gate is universal for classical computing, so by including the `quintessentially quantum' Hadamard gate one elevates classical universality to quantum universality. As the above gates are real, one uses an additional ancilla to simulate complex numbers.
\end{example}

The above examples motivate the following questions:
\begin{enumerate}[(1)]
\item Is it possible to provide a gadget for the Hadamard gate using only incoherent unitaries, computational basis measurements, and an ancilla?
\item Is universal quantum computation possible with only incoherent unitaries acting on some resourceful state? Or does any universal model require some coherence (e.g. Hadamard) in the operations?
\item Where can we put the `cut' between states and operations for quantum computation in general?
\item If coherence must be present in the operations, how much coherence is necessary and sufficient for universality?
\item Is there a connection between the role of coherence in gadget-based approaches, and the role of measurements in MBQC approaches?
\end{enumerate}

The purpose of this paper is to initiate this line of research, and make progress in answering some of these questions. We provide answers in the negative to points (1) and (2), whilst discussing (3) - (5) towards the end of the document and motivating them for future research.

In particular, we rule out the existence of circuits of the following general form, for $U$ and $V$ incoherent unitaries (e.g. products of CNOTs and $T$ gates):

\[
  \Qcircuit @C=1em @R=1em {
\ket{\psi} \quad &&\qw& \multigate{3}{\quad U \quad} & \qw & \gate{V} & \qw && H\ket{\psi} \\
 &&\qw & \ghost{\quad U \quad} & \qw &\meter \cwx[-1] \\ 
 && \qvdots & & &  \qvdots \\ 
 &&\qw & \ghost{\quad U \quad} & \qw &\meter  \inputgroupv{2}{4}{1em}{2em}{\ancillaket} \\
}
\]
Note that we know that the above diagram is possible with $\ancillaket = \ket{+}$ if we instead allow $X$ measurements, as shown in \cref{ex:hadgadget}.

At this stage, one might worry that the ancillary state $\ancillaket$ cannot be useful only as the set of unitaries considered are completely resourceless. A priori, it is possible that the use of a single Hadamard gate could allow incoherent unitaries to unlock all the power from the ancillary state to implement a more coherent gate, for example, two Hadamards. Hence we also consider the natural extension of whether incoherent unitaries, computational basis measurements, an ancilla, and $k$ Hadamards can simulate $n > k$ Hadamards. In the case of $k=1$ and $n=2$, the corresponding diagram could be of the form:

\[
  \Qcircuit @C=1em @R=1em {
 &&\qw& \multigate{4}{\quad U_1 \quad} & \gate{H} & \multigate{4}{\quad U_2 \quad} &\multigate{1}{ ~ V ~ } & \qw &&  \\
 &&\qw & \ghost{\quad U_1 \quad} & \qw &  \ghost{\quad U_2 \quad} & \ghost{~V~} & \qw\inputgroupv{1}{2}{0em}{1.3em}{\ket{\psi}}\gategroup{1}{8}{2}{8}{.8em}{\}} &&&  \raisebox{2em}{\ensuremath{H_1 H_2 \ket{\psi}}} \\ 
 &&\qw & \ghost{\quad U_1 \quad} & \qw &  \ghost{\quad U_2 \quad} &\meter \cwx[-1] \\ 
 && \qvdots & && &  \qvdots \\ 
 &&\qw & \ghost{\quad U_1 \quad} & \qw &  \ghost{\quad U_2 \quad} & \meter  \inputgroupv{3}{5}{0em}{1.5em}{\ancillaket} \\
}
\]

where $U_i$ and $V$ are incoherent unitaries and $H_i$ denotes a Hadamard gate on the $i$-th qubit. We are also able to rule out this case in this work. This demonstrates the importance of having the ability to generate large amounts of coherence in any model of quantum computation, directly contrasting with the magic state injection case in which all the `non-stabiliserness' can be placed in supplementary ancillas.

The document is organised as follows. After fixing notation, we motivate a general framework for quantum computation involving some free unitaries acting on a resourceful state. We then apply this to coherence in our results section, focusing on the case of incoherent resources attempting to use a supplementary state to implement $H^{\otimes n}$ (we refer to this as the case $0 \mapsto n$), as well as the case of incoherent resources and the use of $k$ Hadamards to implement $H^{\otimes n}$ (the case $k\mapsto n$). We consider the cases of exact, deterministic, approximate and probabilistic implementation. We conclude with a discussion of the key concepts our work relates to, and provide several novel research problems as outlook. \cref{app:resourcecoherence} provides further background to resource theories and coherence.

\subsection{Notation and Definitions}\label{subsec:defns}
Let $\mathcal{L}(\mathcal{H})$ denote the set of linear maps on a Hilbert space $\mathcal{H}$. In this work all Hilbert spaces will be finite dimensional, $\mathcal{H} \cong \mathbbm{C}^d$, and in particular we will focus on qubit systems. Quantum states are positive semi-definite elements of $\mathcal{L}(\mathcal{H})$ with unit trace: we denote this set by $S(\mathcal{H})$. Quantum channels are completely positive trace-preserving (CPTP) maps $\mathcal{E}: \mathcal{L}(\mathcal{H}_1) \rightarrow  \mathcal{L}(\mathcal{H}_2) $. Similarly, quantum subchannels are completely positive trace non-increasing maps from $ \mathcal{L}(\mathcal{H}_1)$ to $\mathcal{L}(\mathcal{H}_2) $. We may abuse notation by referring to a unitary channel $\mathcal{U}(\cdot) = U (\cdot) U^\dagger$ simply as $U$, and by writing $\ket{\psi^n}$ in place of $\ket{\psi}^{\otimes n}$.

The Pauli matrices are
\[
X= \begin{pmatrix}
    0 & 1 \\
    1 & 0
\end{pmatrix},
\qquad Y= \begin{pmatrix}
    0 & -i \\
    i & 0
\end{pmatrix},
\qquad Z= \begin{pmatrix}
    1 & 0 \\
    0 & -1
\end{pmatrix},
\]
and the $n$-qubit Pauli group $\mathcal{P}_n$ is generated by tensor products of Pauli matrices, elements being referred to simply as `Paulis'. The Clifford group is the normaliser of $\mathcal{P}_n$ and is generated by (tensor products of) the following gates
\begin{align}
&\text{\underline{Hadamard}} &&\text{\underline{Phase}} &&\text{\underline{Controlled-NOT}} \\\nonumber
&H= \frac{1}{\sqrt{2}}\begin{pmatrix}
    1 & 1 \\
    1 & -1
\end{pmatrix},
\qquad &&S= \begin{pmatrix}
    1 & 0 \\
    0 & i
\end{pmatrix},
\qquad &&CNOT= \begin{pmatrix}
    1 & 0 & 0 & 0 \\
    0 & 1 & 0 & 0 \\
    0 & 0 & 0 & 1 \\
    0 & 0 & 1 & 0
\end{pmatrix}.
\end{align}

The $T$ gate and the $T$ state are respectively defined as
\[
T= \begin{pmatrix}
    1 & 0 \\
    0 & e^{\frac{i \pi}{4}}
\end{pmatrix}, \hspace{30pt} \ket{T} = \frac{1}{\sqrt{2}}\left ( \ket{0} + e^{\frac{i \pi}{4}} \ket{1} \right ).
\]

The trace distance on quantum states is defined as

\[
D(\rho, \sigma) := \frac{1}{2}\norm{\rho-\sigma}_1, \label{eq:tracedistance}
\]
where $\norm{M}_1 = \text{Tr}(\sqrt{M^\dagger M})$. The trace distance has the following properties for all states $\rho$, $\sigma$: (i) positivity: $D(\rho, \sigma) \geq$ 0 with equality $\iff \rho=\sigma$ (ii) symmetry: $D(\rho, \sigma) = D(\sigma, \rho)$ (iii) triangle inequality: $D(\rho, \sigma) \leq D(\rho, \omega) + D(\omega, \sigma)$, (iv) contractivity: $D(\Lambda (\rho), \Lambda (\sigma)) \leq D(\rho,  \sigma) $ for all quantum channels $\Lambda$.

The induced trace distance on channels results from maximising over possible input states:

\[\label{eq:induced_trace_dist}
\mathcal{D}(\mathcal{E}, \mathcal{V}) := \max_{\rho} ~  D\bigg (\mathcal{E}(\rho),\mathcal{V}(\rho) \bigg ).
\]
We say that a channel $\mathcal{E}$ \textit{$\epsilon$-approximates} a channel $\mathcal{V}$ if they they are at most $\epsilon$ close in this induced trace norm.

In the resource theory of coherence, one fixes a basis $\{ \ket{x} \}$ of $\mathbbm{C}^d$  (which we may refer to as the computational basis). A state is then \textit{incoherent} if it can be written as
\[
\rho = \sum_x p_x \ketbra{x}
\]
in this basis. We denote the set of incoherent states (for an implied fixed dimension) by $\mathcal{I}$ -- note that it is convex and compact. A unitary $U$ is \textit{incoherent} relative to the basis $\{ \ket{x} \}_{x=1}^d$  if it can be written as
\[
U = \sum_{x=1}^d e^{i \theta_x} \ketbra{\pi(x)}{x}
\]
for $d$ real numbers $\theta_x$ and some permutation $\pi$ on $d$ elements. In particular, incoherent unitaries map a computational basis state to another computational basis state, possibly multiplied by some phase. They are precisely the maximal set of unitaries mapping $\mathcal{I}$ to itself. Examples of incoherent unitaries include the Pauli operators, the phase and $T$ gates, CNOT, SWAP, and the Toffoli gate. Examples of unitaries that are coherent (i.e. not incoherent, able to generate coherence) include the Hadamard gate, the Fourier transform, and $X$ rotations $e^{i\theta X}$ for $\theta \notin \{n \pi ~ ; ~ n \in \mathbbm{Z} \}$. Coherent unitaries may also be called \textit{basis changing} \cite{shi2002both}.

For a fixed basis $\{ \ket{x} \}$, the \textit{dephasing map} is defined as 
\[
\Delta (\rho) := \sum_x \ketbra{x} \rho \ketbra{x}. \label{eq:dephasing}
\]
This has the effect of removing the off-diagonal elements on a density operator, hence $\Delta (\rho) \in \mathcal{I}$ for all states $\rho$. We may use the symbol $\Delta$ multiple times in an expression even though they may act on quantum states of different dimensions, which can be inferred from context. We use the term \textit{incoherent resources} informally to refer to operations arising from incoherent unitaries, classical control, computational basis measurements and preparation, and partial traces. Throughout we will use $\ancillaket$ as notation for a pure ancilla, and $\ancillamixed$ as notation for a mixed state ancilla.

See \cref{subsec:prelim} for more preliminaries, and \cref{app:resourcecoherence} for further background on resource theories, in particular that of coherence.

\section{Framework} \label{sec:framework}
In this section we will consider a general paradigm for quantum computation using some additional ancillary state as a resource. Consider the following: \\

\boxed{\begin{minipage}{.5\linewidth}
\textbf{Free operations: }
\begin{itemize}
 \setlength\itemsep{-1em}
    \item Preparation of computational basis states.\\
    \item Measurement in the computational basis. \\
    \item Classical control and adaptivity. \\
    \item Some set of unitaries $\mathcal{U}$.
\end{itemize}
\end{minipage}}\hspace{20pt}
\textbf{\huge $\bm{+}$}
\begin{minipage}{.35\linewidth}
\vspace{-20pt}
\begin{align}
&\text{\hspace{-10pt}an additional}\label{eq:operations}\\\nonumber
&\text{\hspace{-10pt}resourceful state $\ancillaket$.}\end{align} 
\end{minipage} \\[8pt]

Here `classical control and adaptivity' refers to the ability to perform a unitary from $\mathcal{U}$ or measurement classically conditioned on the outcomes of previous measurements.

\subsection{Examples}

Many approaches to quantum computation fall into the above framework -- see \cref{tab:examples} for a list of examples. In the standard circuit model, we take the set of unitaries $\mathcal{U}$ to be a universal gate set, and do not consider a supplementary state $\ancillaket$ (i.e. it is redundant here). In the magic state injection model, the set $\mathcal{U}$ is taken as Clifford gates, and the supplementary state $\ancillaket$ can be taken as a tensor product of $T$ states $\ancillaket = \ket{T}^{\otimes{m}}$, where $m$ would be the number of $T$ gates in the desired circuit. For efficient quantum computation, the depth of a family of circuits should grow at most polynomially in terms of the number of qubits $n$, hence in practice we would require $m=O(\text{poly}(n))$. Similarly, we could also take $\mathcal{U}$ to be nearest-neighbour matchgates, and $\ancillaket$ to be a polynomial number of SWAP states (as defined and discussed in \cite{hebenstreit2019all}).

\renewcommand{\arraystretch}{3}
\begin{table}[htbp!]
    \centering
\begin{tabular}{|c||c|c|c|}\hline
    Model & Operations & State & References \\\hline
     Circuit & Universal gate set & $-$ & \cite{nielsen2002quantum} \\
     Magic State Injection & Clifford & $\ket{T}^{\otimes p}$ &\cite{bravyi2005universal} \\
     Matchgates & \shortstack{Nearest neighbour \\ matchgates} & $\ket{SWAP}^{\otimes p}$ & \cite{hebenstreit2019all, hebenstreit2020computational} \\
     1-way MBQC & LOCC & Graph state & \cite{briegel2009measurement, van2007fundamentals}\\
     1-way MBQC & Hadamard & Hypergraph state & \cite{takeuchi2019quantum} \\
     Teleportation MBQC & \shortstack{Rotated Bell unitaries \\ Paulis} & $\ket{\phi^+}^{\otimes p}$& \cite{gottesman1999quantum, jozsa2006introduction} \\\hline
\end{tabular}
    \caption{Comparison of different models of quantum computation that fall into the framework summarised in \cref{eq:operations}, we are also allowing classical control and computational basis measurement and preparation freely. $\ket{T}$, $\ket{SWAP}$ and $\ket{\phi^+}$ respectively refer to the $T$ state, the SWAP state \cite{hebenstreit2019all}, and the maximally entangled state. Here the tensor power $p$ should be taken as some polynomial of the number of qubits. In place of considering measurements, we can consider the corresponding unitaries that rotate the computational basis into the appropriate basis. See also Table 1 in \cite{takeuchi2019quantum} for a more extensive summary of MBQC approaches using different measurement bases.}
    \label{tab:examples}
\end{table}
\renewcommand{\arraystretch}{1}

For measurement-based quantum computation, we take $\ancillaket$ to be some entangled state, such as a graph or hypergraph state. The operations permitted here are usually taken to be local projective measurements, but we can include them in the above framework in the following way. Instead of measuring in a specific basis, we could first apply a local unitary and then measure in the computational basis. Explicitly, if we wish to measure observable $\mathcal{P} = \sum_x \alpha_x \ketbra{\psi_x}$, we could instead perform the unitary $U=\sum_x \ketbra{x}{\psi_x}$ and measure in the computational basis to the same effect\footnote{This is effectively the Heisenberg picture.}. Hence we can incorporate measurement-based approaches here, however note that the reverse direction does not hold: the ability to perform measurements in various bases does not directly imply the ability to perform the corresponding unitaries\footnote{However as we have seen, there exist Hadamard gadgets if one is allowed to perform an $X$ measurement. Hence with respect to incoherent unitaries, the Hadamard gate and $X$ measurement are equivalent in some sense.}. We also remark that if the set of unitaries $\mathcal{U}$ are single qubit unitaries, then clearly we are in the measurement based scenario (as opposed to some gate injection scheme).

Let us also recall teleportation-based MBQC \cite{gottesman1999quantum, jozsa2006introduction}. Consider the following two circuit identities:

\[
  \Qcircuit @C=1em @R=1em {
\ket{\psi} && \multigate{1}{B} & \meter & \control \cw \cwx[1] \\
 &&  \ghost{B} & \meter & \control \cw \cwx[1]  \\
 &&  \qw & \qw & \gate{V} & \qw  \hspace{30pt} U\ket{\psi}, \inputgroupv{2}{3}{0.1em}{1.1em}{\id \otimes U\ket{\phi^+} \hspace{30pt}}} \hspace{90pt}
 \Qcircuit @C=1em @R=1em {
\ket{\psi} && \multigate{1}{B(U)} & \meter & \control \cw \cwx[1] \\
&&  \ghost{B(U)} & \meter & \control \cw \cwx[1]  \\
 &&  \qw & \qw & \gate{V} & \qw \hspace{30pt} U\ket{\psi}, \inputgroupv{2}{3}{0.1em}{1.1em}{\ket{\phi^+} }}
\]

where $B:=(H \otimes \id) CNOT_{12}$ denotes the Bell unitary, $B(U):= (\id \otimes U ) B$ is the rotated Bell unitary, $\ket{\phi^+} = d^{-\frac{1}{2}}\sum_{k=0}^{d-1} \ket{k k} \in \mathbbm{C}^{d^2}$ is the maximally entangled state, and $V$ is a Pauli correction term. Hence given a pool of Bell states or rotated Bell states we can achieve universality in this way. One can also teleport the CNOT gate in a similar fashion using a 4 qubit Bell state.

Pauli based computation (PBC) \cite{bravyi2016trading} proceeds by adaptively performing non-destructive Pauli measurements on $\ket{T}$ states as input. To incorporate this into our framework by phrasing it in the language of unitaries and computational basis measurements, we would have to find unitaries $U$ such that $U C U^\dagger = P$, where $C$ is a non-degenerate Hermitian operator diagonal in the computational basis, and $P$ is a tensor product of Pauli operators. Note that measuring the operator $Z \otimes \dots \otimes Z$ is not equivalent to measuring in the computational basis (it has $2 $ outcomes as opposed to $2^n$).

We also remark that to consider the notion of \textit{efficient} universal quantum computation, it is necessary to consider a family of sets of unitaries on $n$ qubits, and the size of ancillary state $\ket{\ancillapure(n)}$ should scale at most polynomially with $n$ \cite{van2007fundamentals, mora2010universal}. Our results allow the ancilla to be of arbitrary size, and we show the impossibility of providing a Hadamard gadget (using incoherent resources) within this framework. To extend our discussion to the MBQC framework, the scaling size of the ancillary state must be taken into account. To see this, recall that an $\epsilon$-net is a set of states such that any state is within distance $\epsilon$ of some state in the net. One could take the tensor product all the states in such a net as the ancilla. Then for any given state, there would exist a marginal of $\ancillaket$ within distance $\epsilon$. Thus it may appear that this would lead to a universal model of quantum computation in which the only operations required are partial traces. However, such a state would not scale polynomially in the number of qubits -- we elaborate on this concept in \cref{subsec:discussmbqc}.

\subsection{General Form of Operations}

In order to make concrete statements, we now motivate an expression for a general operation within the above framework. We will first need a short definition:
\begin{definition} \label{def:controlledU}
Given some set of unitaries $\mathcal{U}$ and a preferred basis $\{ \ket{x} \}$, we denote by $\mathcal{C}(\mathcal{U})$ the corresponding set of generalised controlled unitaries. 
In particular, on $n$ qubits these are of the form
\[
\sum_{x \in S} \ketbra{x}{x} \otimes U + \sum_{y \in S^c}\ketbra{y}{y} \otimes \mathbbm{1},
\]
where $U \in \mathcal{U}$ acts on $k \leq n$ qubits, $S \subseteq \{0,1\}^{n-k}$ and $S^c$ is the complement of $S$ in $\{0,1\}^{n-k}$.
\end{definition}

This definition simply describes the quantum equivalent of classically controlled operations: given a computational basis vector, a unitary is performed on a subset of the qubits only for some specific values of the remaining bits. Observe that CNOT, controlled-$Z$ and Toffoli fall under this definition, and general controlled operations are also discussed in \cite{nielsen2002quantum, barenco1995elementary} (but for the case where $S$ contains a single bitstring). Note that this definition also encompasses the case of $ \id \otimes U$ (here $S=\{0,1 \}^{n-k}$ and $S^c$ is the empty set), and also the case of $\id \otimes \id$; we will use the term `controlled-$U$' in this broader sense.

Now consider the operations arising from \cref{eq:operations}. We can without loss of generality append all computational basis states at the beginning, and absorb them into $\ancillaproj$. Hence we can consider the input to be $ \rho\otimes \ancillaproj $.

One could then apply a sequence of intermittent unitaries and computational basis measurements, which could be an adaptive process conditioned on some classical information and the outcomes of previous measurements. Note that we can delay these measurements to the end, by instead using unitaries from the controlled set $\mathcal{C}(\mathcal{U})$. That is, if a unitary $U$ is to be applied on system $A$ conditioned on the outcome of some previous measurement on system $B$, we could instead apply a unitary to system $B$ (mapping the original measurement bases to the computational basis), and perform a controlled-$U$ operation on system $A$ with system $B$ as control. We can then defer the measurement of system $B$ until the end of the computation. This is often referred to as the \textit{principle of deferred measurement} \cite{nielsen2002quantum}, see \cref{fig:deferred_measurement}. For example in one-way MBQC, this would result in a circuit of controlled single qubit unitaries being applied to the cluster state. Finally, one could disregard some of the systems, corresponding to a partial trace. Put together, this now leads to the following observation.

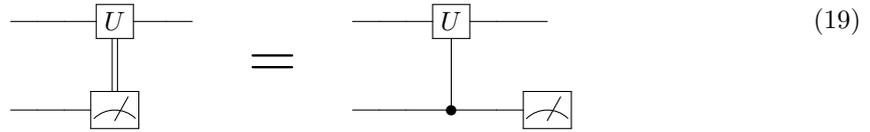
\begin{figure}[htbp!]
    \centering
    \[
  \Qcircuit @C=1em @R=2em {
&\qw &\qw & \gate{U} & \qw &\qw  &&&\raisebox{-4em}{\Huge =}&&&& \qw &\qw & \gate{U} & \qw &\qw   \\
 &\qw &\qw &\meter \cwx[-1] && &&&&&&&  \qw &\qw &\ctrl{-1]} & \qw & \meter 
} 
\]
    \caption{Principle of Deferred Measurement \cite{nielsen2002quantum}.}
    \label{fig:deferred_measurement}
\end{figure}

\begin{observation} \label{ob:map}

The most general channel possible to implement within the above framework is given by

\[
    \mathcal{E}(\rho) = \text{Tr}_X \bigg (  U ( \rho \otimes \ancillamixed ) U^\dagger \bigg ). \label{eq:general_channel}
\]
Here $U$ belongs to the set of controlled unitaries $\mathcal{C}(\mathcal{U})$, $\text{Tr}_X$ denotes a partial trace on some of the subsystems, and $\ancillamixed$ is an arbitrary fixed state.

The most general probabilistic (i.e. trace non-increasing) operation possible to implement within the above framework is given by a convex combination of operations of the form

\[
    \mathcal{E}^x(\rho) = \text{Tr}_X \bigg (  ( \id \otimes \ketbra{x} ) ~ U ( \rho \otimes \ancillamixed )  U^\dagger \bigg ), \label{eq:general_subchannel}
\]
where the projector $\ketbra{x}$ denotes a measurement in the computational basis on some of the subsystems.

\end{observation}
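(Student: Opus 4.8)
The plan is to show that any protocol assembled from the four classes of free operations in \cref{eq:operations}, acting on the input $\rho$ together with the resource state, can be brought into the canonical form of \cref{eq:general_channel,eq:general_subchannel} by three successive reductions: absorbing all preparations into the ancilla, deferring all measurements to the end, and folding discarded registers into the partial trace. The structure follows the informal argument given in the paragraphs preceding the observation, which I would make precise by induction on the number of measurement steps.

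First I would dispose of state preparation. Any computational-basis states prepared during the protocol may, without loss of generality, be prepared at the outset; appending them to the fixed resource $\ancillaket$ (or more generally to a mixed resource $\ancillamixed$, which also accommodates any classical randomness used in the control) and relabelling, the total initial state becomes $\rho \otimes \ancillamixed$ for an arbitrary fixed $\ancillamixed$. This reduces the problem to a protocol consisting only of unitaries from $\mathcal{U}$, computational-basis measurements, classical control, and partial traces, all acting on $\rho \otimes \ancillamixed$.

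The key step is to eliminate adaptivity via the principle of deferred measurement (\cref{fig:deferred_measurement}). Inducting on the number of measurements, whenever a unitary $V \in \mathcal{U}$ on register $A$ is applied conditioned on the outcome of a prior computational-basis measurement on register $B$, I would instead leave $B$ unmeasured and apply the quantum-controlled unitary $\sum_{x\in S}\ketbra{x}{x}\otimes V + \sum_{y\in S^c}\ketbra{y}{y}\otimes\id$, which by \cref{def:controlledU} lies in $\mathcal{C}(\mathcal{U})$, deferring the measurement of $B$ to the end. Since the measured register is effectively classical at the point where the outcome is used, controlling coherently on it and measuring later produces identical statistics on all registers. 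Iterating moves every measurement to the end of the circuit and turns the unitary part into a product of gates each drawn from $\mathcal{C}(\mathcal{U})$, which I collect into a single unitary $U$. It then remains to handle the terminal measurements: for a \emph{deterministic} channel their outcomes are discarded, and a computational-basis measurement followed by discarding the register is precisely a dephasing followed by a partial trace; since $\text{Tr}_B\,\Delta_B = \text{Tr}_B$, these measurements are absorbed into $\text{Tr}_X$, giving \cref{eq:general_channel}. For a \emph{probabilistic} operation one instead retains a particular outcome $x$, yielding $\mathcal{E}^x(\rho)=\text{Tr}_X\big((\id\otimes\ketbra{x})\,U(\rho\otimes\ancillamixed)U^\dagger\big)$; summing over recorded outcomes recovers the deterministic channel, and any trace-non-increasing operation realisable in the framework is a convex combination of such $\mathcal{E}^x$, establishing \cref{eq:general_subchannel}.

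I expect the deferred-measurement reduction to be the main point requiring care: one must check that fully general adaptive dependence — where both \emph{which} gate is applied and \emph{whether} a later measurement even occurs may depend on earlier outcomes — is faithfully reproduced by coherent controls, and that composing many such gates keeps the construction inside the framework, with each factor lying in $\mathcal{C}(\mathcal{U})$ even though the product $U$ need not itself be a single controlled-$U$ (so the phrase ``$U$ belongs to $\mathcal{C}(\mathcal{U})$'' should be read as ``$U$ is a product of elements of $\mathcal{C}(\mathcal{U})$''). Once this operational equivalence is granted, the absorption of preparations and of discarded terminal measurements is essentially bookkeeping.
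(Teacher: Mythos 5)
Your proposal follows essentially the same route as the paper: absorb all basis-state preparations into the fixed ancilla, use the principle of deferred measurement to replace classically-controlled gates by quantum-controlled unitaries from $\mathcal{C}(\mathcal{U})$ with all measurements pushed to the end, identify discarded terminal measurements with dephasing followed by partial trace, and retain a projector $\ketbra{x}$ for the probabilistic case. Your closing caveat — that the resulting $U$ is really a \emph{product} of elements of $\mathcal{C}(\mathcal{U})$ rather than a single one — is a fair reading and is consistent with how the paper actually uses the observation later (e.g.\ $U = U_k V_k \dots U_1 V_1 U_0$, and products of controlled incoherent unitaries remain incoherent), so it does not affect any downstream argument.
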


Note that the channel in \cref{eq:general_channel} could include some final computational basis measurements, but as we consider the overall channel to be independent of the outcomes of these measurements this corresponds to a partial trace. See \cref{rem:pio_channels} in \cref{app:resourcecoherence} for a brief comment on how the channels we consider relate to the resource theory of coherence.

In the probabilistic case, we will be interested in the case where these subchannels are proportional to a unitary channel. Note that in general for a subchannel $\mathcal{E}$ to be proportional to a channel $\mathcal{V}$, i.e.
\[
\frac{\mathcal{E} (\rho)}{\text{Tr} ( \mathcal{E} (\rho))} = \mathcal{V}(\rho),
\]
we must have that $\text{Tr} ( \mathcal{E}^x (\rho))$ is independent of $\rho$, to ensure linearity.

    \begin{remark}
We can take the ancilla to be pure without loss of generality, as we can always purify the state. That is, given $\ancillamixed=\sum p_n \ketbra{n}$ in spectral decomposition, we can take $\ancillaket = \sum_n \sqrt{n} \ket{n} \ket{n}$. This would incur a dimension increase of at most from $d \rightarrow d^2$ and involve rewriting the unitary $U$ as $\id \otimes U$, but in our work we will leave the dimension on the ancilla to be unrestricted, and consider the identity to always be included in the set of free unitaries. We may interchangeably write the channels of the form \cref{eq:general_channel} as $ \text{Tr}_X  (  U ( \rho \otimes \ancillamixed ) U^\dagger  )$ or $ \text{Tr}_X  (  U (\rho \otimes \ancillaproj )U^\dagger  )$.
\end{remark}

Having justified the form for channels considered in our framework, we can now use \cref{eq:general_channel} and \cref{eq:general_subchannel} as a solid foundation as we progress to our results section.

\section{Results}

In this section we now begin presenting in detail our results.
Our first main contribution is to rule out a model of universal quantum computation that involves purely incoherent resources acting on some (possibly coherent) resourceful state. That is, we show that such an example would not exist in \cref{tab:examples}.

We will proceed by considering channels of the form $\mathcal{E}(\rho) = \text{Tr}_X  (  U ( \rho \otimes \ancillamixed ) U^\dagger  )$ as in \cref{eq:general_channel}, and we will compare these to the channel $\rho \mapsto H^{\otimes n} \rho H^{\otimes n}$. As per the discussion above in \cref{sec:framework}, the channel $\rho \mapsto \mathcal{E}(\rho)$ is a mathematical way of writing any operation that involves free unitaries and adaptive computational basis measurements acting on the input state $\rho$ and some fixed ancilla $\ancillamixed$.

We will consider two cases: firstly when we only allow the ability to perform incoherent unitaries (such as CNOT, $S$, $T$ etc.). In this case the unitary $U$ in $\mathcal{E}(\rho)$ will a be quantum controlled version of an incoherent unitary. Secondly, we will tackle the case of using $k$ Hadamards to implement $n>k$ Hadamards. We consider the cases of exact, deterministic, approximate and probabilistic implementation, see \cref{tab:results} for a detailed summary of our findings. 

\newcommand{\bigqm}[1][1]{\text{\larger[#1]{\textbf{?}}}}
\newcommand{\xmark}{\ding{55}}

{\renewcommand{\arraystretch}{2}
\begin{table}[htbp!]
    \centering
    \begin{tabular}{c||c c|c}
         &  \multicolumn{2}{c}{$0 \mapsto n$} &  $k \mapsto n$  \\\hline
        Exact \& Deterministic & \xmark &  (\cref{thm:exact}) & \quad \xmark \quad (\cref{thm:kton}) \\
           Exact \& Probabilistic & \xmark & (\cref{thm:exact}) & \bigqm[2] \\
        Approximate \& Deterministic & $
\mathcal{D}  \geq   1 - \frac{1}{2^n}  $ & (\cref{thm:approx}) &   \bigqm[2] \\
        Approximate \& Probabilistic & $
\mathcal{D} \geq 1 - \frac{1}{2^n} $ & (\cref{thm:approx}) & \bigqm[2]
    \end{tabular} 
    \caption{Summary of our results for using $k$ Hadamards, incoherent unitaries, classical control, computational basis measurements and an arbitrary ancilla to simulate $n$ Hadamards, where $n > k$. A cross (\xmark) indicates that we have proven a no-go result for this case. Here $\mathcal{D} =\max_\rho D( \mathcal{E}(\rho), H^{\otimes n} \rho H^{\otimes n})$ for $D$ trace distance and $\mathcal{E}$ is the simulating channel using $k$ Hadamards, as defined in \cref{ob:map}. A question mark (\bigqm[1]) indicates that we have not considered this case in this work. The approximate bounds should also be compared with the case of using no ancilla, for which we show a lower bound of $\mathcal{D} \geq \sqrt{1-2^{k-n}}$ in \cref{lem:ancilla_indep}.}
    \label{tab:results}
\end{table}}

The following subsection introduces some simple facts and supporting results.

\subsection{Preliminaries} \label{subsec:prelim}

Here we discuss some basic results that will be useful to us in this section. The following lemma will prove crucial.

\begin{lemma} \label{lem:controlled_incoherent}
The family of controlled unitaries $\mathcal{C}(\mathcal{U})$ are incoherent if and only if $\mathcal{U}$ are incoherent.
\end{lemma}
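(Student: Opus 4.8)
The statement is an "if and only if," so I would prove the two directions separately, and I expect the reverse direction (incoherent $\mathcal{U}$ implies incoherent $\mathcal{C}(\mathcal{U})$) to be the substantive one, while the forward direction is essentially by restriction.

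For the forward direction, suppose some controlled unitary $W \in \mathcal{C}(\mathcal{U})$ is incoherent; I want to conclude the underlying $U \in \mathcal{U}$ is incoherent. I would use the characterisation given in the excerpt that a unitary is incoherent precisely when it maps every computational basis state to a (phased) computational basis state, equivalently when it maps the set $\mathcal{I}$ of incoherent states into itself. Take any $x \in S$ (if $S$ is empty the controlled unitary is just $\id$ and there is nothing forcing $U$ to be anything, but then $\mathcal{U}$ need not be constrained — so I would state the lemma's intended reading as: for a nontrivial control set, or note that the ``only if'' really says \emph{if $\mathcal{C}(\mathcal{U})$ is incoherent then so is $\mathcal{U}$}, applied componentwise). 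Fixing a control string $x \in S$ and feeding in $\ket{x}\otimes\ket{b}$ for an arbitrary target basis state $\ket{b}$, the controlled unitary acts as $\ket{x}\otimes U\ket{b}$; incoherence of $W$ forces $U\ket{b}$ to be a phased basis state for every $b$, which is exactly incoherence of $U$.

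For the reverse direction, assume every $U \in \mathcal{U}$ is incoherent and take a generic $W = \sum_{x\in S}\ketbra{x}{x}\otimes U + \sum_{y\in S^c}\ketbra{y}{y}\otimes\id$. I would evaluate $W$ on an arbitrary computational basis vector $\ket{z}\otimes\ket{b}$: either $z \in S$, giving $\ket{z}\otimes U\ket{b}$, or $z \in S^c$, giving $\ket{z}\otimes\ket{b}$. Since $U$ is incoherent, $U\ket{b} = e^{i\theta_b}\ket{\pi(b)}$ is a phased basis state, so in both cases the output is a phased computational basis vector. Hence $W$ sends every basis state to a phased basis state, and since $W$ is unitary this exhibits it in the form $\sum_w e^{i\phi_w}\ketbra{\sigma(w)}{w}$ for a permutation $\sigma$ — i.e. $W$ is incoherent by definition. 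I would also remark that this directly handles the special cases $W = \id\otimes U$ and $W=\id\otimes\id$ noted in \cref{def:controlledU}.

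The only real subtlety — and the step I would treat most carefully — is the bookkeeping over the control register: confirming that acting on \emph{arbitrary} basis states of the full $n$-qubit system (not just product states with a fixed control value) always lands on a phased basis state, and that the resulting map on the whole index set is a genuine permutation. This is routine once one writes $W$ in its defining block form, so I do not anticipate a genuine obstacle; the content is simply to note that controlling an operation on computational-basis values cannot create superposition precisely because the control itself is read in the computational basis.
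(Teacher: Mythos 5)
Your proof is correct and follows essentially the same route as the paper's: evaluate the controlled unitary on computational basis states $\ket{c_1}\ket{c_2}$, split into the cases $c_1 \in S$ and $c_1 \in S^c$, and conclude it maps basis states to phased basis states; for the converse, the paper argues the contrapositive (a coherent $U$ makes the controlled gate map some $\ket{c_1 c_2}$ with $c_1 \in S$ to a superposition), which is logically the same as your direct argument. Your added remark about the empty control set $S$ is a reasonable refinement of an edge case the paper passes over silently, but it does not change the substance of the argument.
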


\begin{proof}
Recall from \cref{def:controlledU} that the controlled unitaries are of the form
\[
V = \sum_{x \in S} \ketbra{x}{x} \otimes U + \sum_{y \in S^c }\ketbra{y}{y} \otimes \mathbbm{1}
\]
for $S$ some subset of bitstrings, and $S^c$ its complement.

Now consider this operator acting on a computational basis state $\ket{c_1} \ket{c_2}$, (with the same tensor product structure as $V$ above). First suppose that $c_1 \in  S^c$. Then $V\ket{c_1 c_2} = \ket{c_1 c_2}$. Now consider $c_1 \in  S$. For $U = \sum_z e^{i \theta_z} \ketbra{\pi(z)}{z}$ incoherent, we have 

\begin{align}
V \ket{c_1 c_2}=&\sum_{x \in S} \ketbra{x}{x} \otimes U \ket{c_1 c_2} \\
=&\sum_{x \in S} \ketbra{x}{x} \otimes \bigg ( \sum_z e^{i \theta_z} \ketbra{\pi(z)}{z} \bigg ) \ket{c_1 c_2} \\
=& e^{i \theta_{c_2}} \ket{c_1} \ket{\pi(c_2)} .
\end{align}
Hence $V$ is of the form $V= \sum_x e^{i \theta_x} \ketbra{\pi(x)}{x}$ and is thus incoherent.
For the other direction, if $U$ is not incoherent, then it will map at least one basis vector $\ket{c_2}$ to a superposition. Then for $c_1 \in S$, $V$ will map $\ket{c_1 c_2}$ to $\ket{c_1}U\ket{c_2}$, which will also be a superposition, and so $V$ is not incoherent.
\end{proof}

For our purposes, this Lemma allows us to take $U$ to be itself incoherent in \cref{ob:map}, as by definition it belongs to the set of controlled incoherent unitaries. Note also that as SWAP is itself an incoherent unitary, in our case we can without loss of generality take the trace in \cref{ob:map} to be on the ancillary subsystem, i.e.
\[
    \mathcal{E}^x(\rho) = \text{Tr}_2 \bigg ( U ( \rho \otimes \ancillamixed )  U^\dagger \bigg ).
\]

In addition, we will consider the case of being able to perform $k$ Hadamard gates, and seek to use incoherent resources and a supplementary state to implement $n>k$ Hadamard gates. In this case, without loss of generality the unitary $U$ above will be of the following form
\[
U=U_k V_k \dots U_1 V_1 U_0, \label{eq:U_iV_i}
\] for $U_i$ incoherent unitaries and $V_i$ controlled-Hadamards (``controlled'' in the general sense of \cref{def:controlledU}).

The following definition serves as a useful discrete quantifier of coherence for pure states.

\begin{definition}  \label{def:coherencerank}
    The \textit{coherence rank} \cite{killoran2016converting, streltsov2017colloquium} of a pure state $\ket{\psi}$ is defined to be the minimum number of terms required to write the state as a linear combination of computational basis states. We denote this by $\chi(\ket{\psi})$.
\end{definition}
For example $\chi(\ket{x})=1$ for any computational basis state $\ket{x}$, and $\chi(\ket{+}^{\otimes n})=2^n$. We also have that $\chi(\ket{\psi}\otimes\ket{\phi})= \chi(\ket{\psi})\chi(\ket{\phi})$. With this defined, we can state the following lemma.

\begin{lemma} \label{lem:cohrankprod}
    Let $U=U_k V_k \dots U_1 V_1 U_0$ be a product of unitaries, alternating between incoherent unitaries $U_i$ and controlled-Hadamards $V_i$. Then we have for any state $\ket{\psi}$, the coherence rank satisfies
    \[
    \frac{\chi(\ket{\psi})}{2^k} \leq \chi \left (U \ket{\psi} \right ) \leq 2^k \chi(\ket{\psi}).  
    \]
\end{lemma}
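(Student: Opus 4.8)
The plan is to track how the coherence rank transforms under each of the two elementary building blocks appearing in $U$, namely the incoherent unitaries $U_i$ and the controlled-Hadamards $V_i$, and then compose. The central observations are that (i) incoherent unitaries leave the coherence rank exactly invariant, and (ii) each controlled-Hadamard changes the coherence rank by at most a factor of two in either direction.

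For (i), write $\ket{\psi}$ in a minimal incoherent decomposition $\ket{\psi} = \sum_{j=1}^{\chi(\ket{\psi})} c_j \ket{b_j}$. Since an incoherent unitary has the form $U_i = \sum_x e^{i\theta_x}\ketbra{\pi(x)}{x}$ for a permutation $\pi$, it sends $\ket{b_j} \mapsto e^{i\theta_{b_j}}\ket{\pi(b_j)}$; because $\pi$ is a bijection the images $\ket{\pi(b_j)}$ remain distinct, so $U_i\ket{\psi}$ has exactly $\chi(\ket{\psi})$ terms and $\chi(U_i\ket{\psi}) = \chi(\ket{\psi})$. For (ii), recall from \cref{def:controlledU} that a controlled-Hadamard acts on a computational basis state $\ket{c_1}\ket{c_2}$ as either the identity (when $c_1 \in S^c$) or as $\ket{c_1}\otimes H\ket{c_2}$ (when $c_1\in S$); in both cases the image is a superposition of at most two basis states. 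Applying $V_i$ to the minimal decomposition of any $\ket{\psi}$ therefore produces a linear combination of at most $2\chi(\ket{\psi})$ basis states, so $\chi(V_i\ket{\psi})\le 2\chi(\ket{\psi})$, with possible collisions only reducing the count.

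Composing these along $U = U_k V_k \cdots U_1 V_1 U_0$ gives the upper bound directly: the $k+1$ incoherent factors do not change the coherence rank while each of the $k$ controlled-Hadamards at most doubles it, so $\chi(U\ket{\psi})\le 2^k\chi(\ket{\psi})$. The lower bound follows by applying this same upper bound to the inverse map. The key point is that $U^{-1} = U_0^\dagger V_1 U_1^\dagger \cdots V_k U_k^\dagger$ is again an alternating product of $k$ controlled-Hadamards (each $V_i$ is Hermitian since $H^\dagger = H$, so $V_i^\dagger = V_i$ is still a controlled-Hadamard) and incoherent unitaries (the incoherent unitaries form a group, so each $U_i^\dagger$ is incoherent). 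Hence $\chi(\ket{\psi}) = \chi(U^{-1}U\ket{\psi})\le 2^k\chi(U\ket{\psi})$, which rearranges to $\chi(U\ket{\psi})\ge \chi(\ket{\psi})/2^k$.

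The two doubling bounds are elementary counting, so the only genuinely load-bearing idea is the symmetry argument for the lower bound: rather than trying to argue directly that $U$ cannot collapse the coherence rank too much, one simply invokes the already-established upper bound for the structurally identical inverse circuit. I would be careful to verify that $U^{-1}$ genuinely satisfies the hypotheses of the lemma (the same value of $k$, the alternating structure, Hermiticity of the controlled-Hadamards, and closure of incoherent unitaries under inversion), since that is precisely what licenses reusing the upper bound.
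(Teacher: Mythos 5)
Your proof is correct and follows essentially the same route as the paper's: both rest on the facts that incoherent unitaries preserve the coherence rank exactly, that a controlled-Hadamard changes it by at most a factor of two, and that the lower bound comes from self-inverseness of the Hadamard factors. The only difference is organizational — the paper applies the inversion trick gate-by-gate (using $V^\dagger = V$ to get a two-sided single-step bound) and carries both bounds through an induction, whereas you compose the upper bound first and then invert the entire circuit once; this is the same symmetry argument at a coarser granularity.
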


\begin{proof}
First note that $\chi(U\ket{\psi}) = \chi(\ket{\psi})$ for any incoherent unitary $U$ (they can only permute and apply local phases to computational basis states). Now let $V$ be a Hadamard or controlled-Hadamard gate (in the general sense of \cref{def:controlledU}), and consider the action on a computational basis state $\ket{x}$. We have that $ V \ket{x}$ must have coherence rank either $1$ or $2$. Write $\ket{\psi} = \sum_x \alpha_x \ket{x}$, from which it becomes clear that $V \ket{\psi} = \sum_x \alpha_x V \ket{x}$ can have at most $2\chi(\ket{\psi})$ terms (some of the terms could cancel). Hence we have $\chi (V \ket{\psi}) \leq 2 \chi(\ket{\psi})$ for all $\ket{\psi}$, which also implies that $\chi (\ket{\psi}) \leq 2 \chi(V^\dagger \ket{\psi}) = 2 \chi(V \ket{\psi})$, as $V$ is self-inverse. Combining these shows that
 \[
    \frac{\chi(\ket{\psi})}{2} \leq \chi \left (V \ket{\psi} \right ) \leq 2 \chi(\ket{\psi}).
    \]

    Now we can use induction. The base case of $U=U_1 V_1$ follows immediately from the above. Now suppose that
     \[
    \frac{\chi(\ket{\psi})}{2^k} \leq \chi \left (U_k V_k \dots U_1 V_1 U_0 \ket{\psi} \right ) \leq 2^k \chi(\ket{\psi})  .
    \]
    Define $\ket{\phi} = U_k V_k \dots U_1 V_1 U_0 \ket{\psi}$, and we then get
 \[
    \frac{\chi(\ket{\phi})}{2} \leq \chi \left (U_{k+1}V_{k+1} \ket{\phi} \right ) = \chi \left (V_{k+1} \ket{\phi} \right )  \leq 2 \chi(\ket{\phi}).
    \]

\end{proof}
We will use these upper and lower bounds on the coherence rank as a key ingredient in \cref{thm:kton}, one of our no-go results.

We can also make some simple observations about the trace distance, in particular:

\begin{restatable}{lemma}{tracedistancepure}
 \label{lem:tracedistancepure}
    For the induced trace distance, it is sufficient to take the maximum over pure states, i.e. for any channels $\mathcal{E}$, $\mathcal{V}$
    \[
    \mathcal{D}  ( \mathcal{E}, \mathcal{V}  ) := \max_\rho ~ D\bigg ( \mathcal{E}(\rho), \mathcal{V}(\rho)\bigg ) =   \max_{\ket{\phi} } ~ D \bigg ( \mathcal{E}(\ketbra{\phi}), \mathcal{V}(\ketbra{\phi}) \bigg ) .
    \]
\end{restatable}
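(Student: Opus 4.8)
The plan is to exploit the convexity of the map $\rho \mapsto D(\mathcal{E}(\rho), \mathcal{V}(\rho))$ together with the fact that the pure states are precisely the extreme points of the compact, convex set of density matrices $S(\mathcal{H})$. Since a convex function on a compact convex set attains its maximum at an extreme point, the maximization over all states collapses to one over pure states, which is exactly the claimed equality.

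First I would establish that $\rho \mapsto D(\mathcal{E}(\rho), \mathcal{V}(\rho))$ is convex. Because quantum channels are linear maps, for any $\rho = \lambda \rho_1 + (1-\lambda)\rho_2$ with $\lambda \in [0,1]$ we have $\mathcal{E}(\rho) - \mathcal{V}(\rho) = \lambda(\mathcal{E}(\rho_1) - \mathcal{V}(\rho_1)) + (1-\lambda)(\mathcal{E}(\rho_2) - \mathcal{V}(\rho_2))$. Applying the triangle inequality (property (iii) listed after \cref{eq:tracedistance}) for the trace norm then gives $D(\mathcal{E}(\rho), \mathcal{V}(\rho)) \leq \lambda D(\mathcal{E}(\rho_1), \mathcal{V}(\rho_1)) + (1-\lambda) D(\mathcal{E}(\rho_2), \mathcal{V}(\rho_2))$, so the function is convex in $\rho$.

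Next I would invoke the spectral decomposition: any density matrix can be written as $\rho = \sum_i p_i \ketbra{\phi_i}$, a convex combination of pure states. Convexity then yields $D(\mathcal{E}(\rho), \mathcal{V}(\rho)) \leq \sum_i p_i D(\mathcal{E}(\ketbra{\phi_i}), \mathcal{V}(\ketbra{\phi_i})) \leq \max_i D(\mathcal{E}(\ketbra{\phi_i}), \mathcal{V}(\ketbra{\phi_i})) \leq \max_{\ket{\phi}} D(\mathcal{E}(\ketbra{\phi}), \mathcal{V}(\ketbra{\phi}))$. Hence the maximum over all states is bounded above by the maximum over pure states; since the pure states form a subset of all states the reverse inequality is immediate, and combining the two gives the desired equality.

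The argument is essentially routine, being a structural consequence of the trace distance arising from a norm. There is no substantial obstacle; the only point requiring minor care is checking that convexity relies solely on the linearity of the channels and the triangle inequality, both of which hold here, so the maximum is genuinely attained on the extreme points.
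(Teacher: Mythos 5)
Your proposal is correct and takes essentially the same route as the paper: decompose $\rho$ into pure states via the spectral theorem, use linearity of the channels together with convexity of the trace distance to upper-bound by the maximum over pure states, with the reverse inequality being immediate since pure states are a subset of all states. The only cosmetic difference is that you derive the needed (joint) convexity from scratch via linearity and the norm's triangle inequality and homogeneity, whereas the paper cites joint convexity of $D$ as a known fact; just note that the property you need there is the triangle inequality of the trace \emph{norm} on operators, not literally property (iii) (the metric triangle inequality on states), though both follow from the trace norm being a norm.
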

\begin{proof}
    \begin{align}
         \max_\rho D ( \mathcal{E}(\rho), \mathcal{V}(\rho)) &=  \max_{p_i, \psi_i} D \bigg ( \sum p_i \mathcal{E}( \psi_i), \sum p_i \mathcal{V}( \psi_i) \bigg ) \\
         &\leq \max_{p_i, \psi_i} \sum p_i D(  \mathcal{E}( \psi_i),  \mathcal{V}( \psi_i)) \\
         &\leq \max_{p_i, \psi_i} \sum p_i  \left ( \max_{\phi} D( \mathcal{E}(\phi), \mathcal{V}(\phi)) \right ) \\
         &=  \max_{\phi} D( \mathcal{E}(\phi), \mathcal{V}(\phi)) \\
         & \leq  \max_\rho D( \mathcal{E}(\rho), \mathcal{V}(\rho))
    \end{align}
where $\rho$ denotes a mixed state,  $\psi_i$ and $\phi$ denote pure states, and we used linearity of the channels and joint convexity of the trace distance \cite{nielsen2002quantum}.
\end{proof}

For pure states we also have that  \cite{wilde2013quantum}
\[
D \bigg (\ketbra{\psi}, \ketbra{\phi} \bigg ) = \sqrt{1 - \abs{\braket{\psi}{\phi}}^2}.
\]
Combining this with \cref{lem:tracedistancepure} leads to the following corollary:

\begin{corollary} \label{cor:tdistpure}
    The induced trace distance between unitary channels is given by
    \begin{align}
     \mathcal{D}( U, V ) &= \max_{\ket{\psi}} \sqrt{1- \abs{\bra{\psi}U^\dagger V\ket{\psi}}^2}.
    \end{align}
\end{corollary}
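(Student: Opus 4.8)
The plan is to compose the reduction to pure inputs from \cref{lem:tracedistancepure} with the closed-form expression for the trace distance between two pure states. Since here $U$ and $V$ denote unitary channels (i.e. $\mathcal{U}(\rho) = U\rho U^\dagger$ and likewise for $V$), the essential observation is that unitary channels map pure states to pure states, so both arguments of the trace distance remain rank-one and the explicit pure-state formula applies directly.

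First I would invoke \cref{lem:tracedistancepure} to restrict the optimisation to pure inputs, writing $\mathcal{D}(U,V) = \max_{\ket{\psi}} D\big(U\ketbra{\psi}U^\dagger,\, V\ketbra{\psi}V^\dagger\big)$. Next I would note that $U\ketbra{\psi}U^\dagger = \ketbra{U\psi}$ and $V\ketbra{\psi}V^\dagger = \ketbra{V\psi}$ are both projectors onto the pure states $U\ket{\psi}$ and $V\ket{\psi}$. This is the only place unitarity is used, and it is precisely what guarantees that the pure-state trace-distance identity is applicable to both arguments simultaneously.

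Then I would apply the pure-state formula $D(\ketbra{\alpha},\ketbra{\beta}) = \sqrt{1 - \abs{\braket{\alpha}{\beta}}^2}$ with $\ket{\alpha} = U\ket{\psi}$ and $\ket{\beta} = V\ket{\psi}$, and rewrite the overlap using $\braket{U\psi}{V\psi} = \bra{\psi}U^\dagger V\ket{\psi}$. Taking the maximum over $\ket{\psi}$ then gives
\[
\mathcal{D}(U,V) = \max_{\ket{\psi}} \sqrt{1 - \abs{\bra{\psi}U^\dagger V\ket{\psi}}^2},
\]
as claimed. There is no substantial obstacle: the entire content lies in the two preceding results, and the corollary is simply their composition. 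The only point requiring care is confirming that both output states are pure, which is automatic for unitary channels and is what allows the pure-state formula to be invoked without further argument.
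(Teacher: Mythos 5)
Your proposal is correct and follows exactly the same route as the paper: the paper also obtains this corollary by combining \cref{lem:tracedistancepure} with the pure-state identity $D(\ketbra{\psi},\ketbra{\phi}) = \sqrt{1-\abs{\braket{\psi}{\phi}}^2}$, using unitarity only to keep both outputs pure. Your write-up merely makes explicit the steps the paper leaves implicit.
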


Let us now briefly consider the case of no-ancilla. We would expect a non-zero distance between $H^{\otimes n}$ and a unitary composed of incoherent gates and $k <n$ Hadamard gates. We have the following lower bound in this case:

~ 

\begin{lemma}

Let $U=U_k V_k \dots U_1 V_1 U_0 $, where $U_i$ are incoherent and $V_i$ are controlled \\ Hadamards, and let $n \geq k$. Then
    \begin{align}
    \mathcal{D} ( U, H^{\otimes n}) \geq \sqrt{1 - 2^{k-n}}.
\end{align}
\end{lemma}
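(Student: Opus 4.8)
The plan is to invoke \cref{cor:tdistpure}, which expresses the induced trace distance as $\mathcal{D}(U, H^{\otimes n}) = \max_{\ket{\psi}} \sqrt{1 - \abs{\bra{\psi} U^\dagger H^{\otimes n} \ket{\psi}}^2}$. Since this is a maximum over input states, it suffices to exhibit a single $\ket{\psi}$ for which the overlap $\abs{\bra{\psi} U^\dagger H^{\otimes n} \ket{\psi}}^2$ is at most $2^{k-n}$. I would take the computational basis state $\ket{\psi} = \ket{0^n}$, so that $H^{\otimes n}\ket{0^n} = \ket{+^n} = 2^{-n/2}\sum_x \ket{x}$ and the relevant quantity becomes $\bra{0^n} U^\dagger H^{\otimes n}\ket{0^n} = \langle U0^n \,|\, +^n \rangle$, i.e. the overlap between the spread-out state $\ket{+^n}$ and the state $\ket{\phi} := U\ket{0^n}$.

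The key structural input is \cref{lem:cohrankprod}. Since $\chi(\ket{0^n}) = 1$, that lemma guarantees $\chi(\ket{\phi}) \leq 2^k$, meaning the support of $\ket{\phi}$ in the computational basis is a set $T$ with $\abs{T} \leq 2^k$. Writing $\ket{\phi} = \sum_{x \in T} \beta_x \ket{x}$, normalization gives $\sum_{x \in T} \abs{\beta_x}^2 = 1$, and the overlap reduces to $\langle \phi \,|\, +^n \rangle = 2^{-n/2} \sum_{x \in T} \overline{\beta_x}$. Applying Cauchy--Schwarz \emph{over the support} then yields $\abs{\sum_{x \in T} \overline{\beta_x}} \leq \sqrt{\abs{T}} \cdot \sqrt{\sum_{x\in T}\abs{\beta_x}^2} \leq 2^{k/2}$, so that $\abs{\langle \phi \,|\, +^n \rangle} \leq 2^{-n/2}\cdot 2^{k/2} = 2^{(k-n)/2}$. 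Squaring and substituting into the expression from \cref{cor:tdistpure} gives $\mathcal{D}(U, H^{\otimes n}) \geq \sqrt{1 - 2^{k-n}}$, as required.

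I expect no serious obstacle here; the proof is essentially an application of the already-established coherence-rank bound together with Cauchy--Schwarz. The only points requiring care are, first, checking that \cref{lem:cohrankprod} genuinely applies, which it does because $U$ is assumed to have exactly the alternating form $U_k V_k \dots U_1 V_1 U_0$; and second, restricting the Cauchy--Schwarz step to the support $T$ rather than the full index set of size $2^n$, since using all $2^n$ basis vectors would only produce the trivial bound $\abs{\langle \phi\,|\,+^n\rangle}\le 1$. The conceptual heart of the argument is simply that $k$ Hadamards can spread a single computational basis state across at most $2^k$ amplitudes, which is far fewer than the $2^n$ equal amplitudes comprising $\ket{+^n}$ whenever $n > k$, forcing a small overlap and hence a large trace distance.
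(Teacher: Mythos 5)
Your proposal is correct and follows essentially the same route as the paper's own proof: both invoke \cref{cor:tdistpure} with the witness state $\ket{0^n}$, bound the coherence rank of $U\ket{0^n}$ by $2^k$ via \cref{lem:cohrankprod}, and apply Cauchy--Schwarz to bound the overlap with $\ket{+^n}$ by $2^{(k-n)/2}$. Your restriction of the Cauchy--Schwarz step to the support set $T$ is a minor (and slightly cleaner) presentational variant of the paper's sum over at most $2^k$ not-necessarily-distinct basis states, but the argument is the same.
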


\begin{proof}
Using \cref{cor:tdistpure} we have that
\begin{align}
\mathcal{D}( U, H^{\otimes n} ) &= \max_{\ket{\psi}} \sqrt{1- \abs{\bra{\psi}H^{\otimes n} U\ket{\psi}}^2} \\
&\geq \sqrt{1- \abs{\bra{0^n}H^{\otimes n} U\ket{0^n}}^2} \\
&\geq \sqrt{1- \abs{\bra{+^n}U\ket{0^n}}^2}.
\end{align}
Now $U\ket{0^n} = U_k V_k \dots U_1 V_1 U_0 \ket{0^n}$ will have coherence rank at most $2^k$ (as each incoherent unitary preserves the coherence rank, and each controlled Hadamard can at most double the coherence rank, by  \cref{lem:cohrankprod}). Thus we have $U\ket{0^n} = \sum_{x=0}^{2^k-1} \alpha_x \ket{c_x}$ where $\ket{c_x}$ are (not necessarily distinct) computational basis states. As $\ket{+^n} = 2^{\frac{-n}{2}} \sum_{x \in \{ 0,1 \}^n} \ket{x}$, we have that
\[
\abs{\bra{+^n}U\ket{0^n}}^2 = \abs{\frac{\sum_{x=0}^{2^k-1} \alpha_x}{2^{\frac{n}{2}}}}^2
\leq\frac{2^k \sum_{x=0}^{2^k-1}  \abs{\alpha_x}^2}{2^n}
 = 2^{k-n},
\]
where the inequality follows from Cauchy-Schwarz: $\abs{\sum_{x=0}^{M-1} \alpha_x}^2 \leq M \sum_{x=0}^{M-1} \abs{\alpha_x}^2$ for all $M$, and we used $\sum_{x=0}^{2^k-1} \abs{\alpha_x}^2 = 1$. Hence we have that
\begin{align}
\mathcal{D}( U, H^{\otimes n} ) &\geq \sqrt{1- \abs{\bra{+^n}U\ket{0^n}}^2} \\
&\geq \sqrt{1- 2^{k-n}}.
\end{align}    
\end{proof}

This shows that if $k \ll n$, the unitary $U$ composed of $k$ Hadamards cannot be close in induced trace distance to $n$ Hadamards, as intuitively expected.

We now consider the first case of using purely incoherent unitaries ($0$ Hadamards) and an ancilla to implement $n$ Hadamards, considering the exact, approximate and probabilistic cases. 

\subsection{Incoherent resources and an ancilla cannot implement $n>0$ Hadamards}

We first consider the question of whether incoherent resources supplemented with an arbitrary ancillary state can implement a single Hadamard gate. We show that this is not the case. Our strategy is to first show  in \cref{lem:exact} below that if a channel satisfies a certain relation with the dephasing map, then when acting jointly on an input state and fixed arbitrary ancilla the channel cannot implement any coherent unitary. Secondly, we show in \cref{lem:commutes} that channels that only use incoherent resources supplemented with an arbitrary ancilla state (see \cref{sec:framework}) satisfy this relation, and hence are not able to implement any coherent unitary, such as the Hadamard.

To begin, let us prove the following lemma, which is in fact a generalisation of an observation made in the erratum of \cite{dana2017resource} (see \cref{ex:erratum} for more context here).

\begin{lemma}\label{lem:exact}
Let $\mathcal{E} : \mathcal{S}(\mathcal{H}_1) \otimes  \mathcal{S}(\mathcal{H}_2) \rightarrow  \mathcal{S}(\mathcal{H}_1)$ be any channel such that
\[
 \Delta \circ \mathcal{E} \circ \Delta = \Delta \circ \mathcal{E},\label{eq:thm1condition}
\]
where $\Delta$ is the dephasing map defined in \cref{eq:dephasing}.
 Then for any state $\ancillamixed \in  \mathcal{S}(\mathcal{H}_2)$ the channel $\mathcal{E}_\ancillamixed (\rho):= \mathcal{E}(\rho \otimes \ancillamixed)$ cannot implement any coherent unitary exactly.
\end{lemma}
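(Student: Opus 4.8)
The plan is to argue by contradiction and show that hypothesis \eqref{eq:thm1condition} forces any unitary implemented by $\mathcal{E}_\ancillamixed$ to be incoherent. Concretely, suppose that for the fixed ancilla $\ancillamixed$ the induced channel satisfies $\mathcal{E}_\ancillamixed(\rho) = V \rho V^\dagger$ for all $\rho \in \mathcal{S}(\mathcal{H}_1)$, with $V$ some coherent unitary; I will show this is impossible.

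First I would unpack the hypothesis into a statement about $\mathcal{E}_\ancillamixed$ alone. Since the computational basis of $\mathcal{H}_1 \otimes \mathcal{H}_2$ is the product basis, dephasing factorises as $\Delta(\rho \otimes \ancillamixed) = \Delta(\rho) \otimes \Delta(\ancillamixed)$. Substituting $\sigma = \rho \otimes \ancillamixed$ into \eqref{eq:thm1condition} gives $\Delta(\mathcal{E}_\ancillamixed(\rho)) = \Delta(\mathcal{E}(\Delta(\rho) \otimes \Delta(\ancillamixed)))$, and because $\Delta(\ancillamixed)$ is fixed the right-hand side depends on $\rho$ only through its diagonal $\Delta(\rho)$. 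Hence, under our assumption, the diagonal $\Delta(V \rho V^\dagger)$ is a function of $\Delta(\rho)$ alone.

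Next I would probe this with a one-parameter family of superpositions. Coherence of $V$ means some column $x_0$ (i.e. the vector $V\ket{x_0}$) has at least two nonzero entries; fixing any second index $x_1 \neq x_0$, set $\ket{\psi_\theta} = \tfrac{1}{\sqrt{2}}(\ket{x_0} + e^{i\theta}\ket{x_1})$. Every $\ketbra{\psi_\theta}$ has the same diagonal $\tfrac{1}{2}(\ketbra{x_0} + \ketbra{x_1})$, so by the previous step the diagonal of $V \ketbra{\psi_\theta} V^\dagger$ is independent of $\theta$. Writing $V_{yx} := \bra{y}V\ket{x}$, its $(y,y)$ entry is $\tfrac{1}{2}(\abs{V_{yx_0}}^2 + \abs{V_{yx_1}}^2 + 2\,\mathrm{Re}[e^{i\theta}\overline{V_{yx_0}} V_{yx_1}])$, and $\theta$-independence forces $\overline{V_{yx_0}}V_{yx_1} = 0$ for every $y$ — that is, columns $x_0$ and $x_1$ have disjoint supports. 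As the pair $x_0 \neq x_1$ was arbitrary, all $d$ columns of $V$ have pairwise disjoint supports; being unit vectors in $\mathbbm{C}^d$, each must then have exactly one nonzero entry, so $V$ is a monomial matrix and hence incoherent, contradicting the assumption.

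The argument is short, and I expect two steps to require the most care. The conceptual crux is the phase-averaging step: extracting $\overline{V_{yx_0}}V_{yx_1} = 0$ from $\theta$-independence, which follows from the elementary fact that $\mathrm{Re}[e^{i\theta} z]$ is constant in $\theta$ only when $z = 0$ (for instance by evaluating at $\theta = 0$ and $\theta = \pi/2$). The second point to handle cleanly is the counting argument turning pairwise-disjoint column supports of a $d \times d$ unitary into a monomial structure, and I would also make explicit the product-basis convention underlying $\Delta(\rho \otimes \ancillamixed) = \Delta(\rho) \otimes \Delta(\ancillamixed)$, since the entire reduction rests on it.
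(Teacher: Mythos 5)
Your proposal is correct, but it reaches the conclusion by a genuinely different route than the paper. Both arguments rest on the same initial reduction — the hypothesis $\Delta \circ \mathcal{E} \circ \Delta = \Delta \circ \mathcal{E}$ together with $\Delta(\rho \otimes \ancillamixed) = \Delta(\rho) \otimes \Delta(\ancillamixed)$ implies that $\Delta(V\rho V^\dagger)$ is a function of $\Delta(\rho)$ alone — but they probe this constraint with different input states. The paper feeds in the computational basis states $\rho = \ketbra{x}$, obtaining $\Delta(V\ketbra{x}V^\dagger) = \sum_y \abs{V_{yx}}^2 \ketbra{y}$, and then the rotated state $\rho = V^\dagger \ketbra{z} V$, whose diagonal is $\sum_x \abs{V_{zx}}^2 \ketbra{x}$; comparing the two outputs via linearity forces $\abs{V_{zx}}\abs{V_{yx}} = 0$ for $y \neq z$, i.e.\ every \emph{column} of $V$ has at most one nonzero entry. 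You instead use the phase family $\ket{\psi_\theta}$, whose members all share one diagonal, and extract $\overline{V_{yx_0}}V_{yx_1} = 0$ from $\theta$-independence, i.e.\ any two columns have disjoint supports (equivalently, every \emph{row} has at most one nonzero entry). Either intermediate statement, combined with unitarity, forces $V$ to be monomial, hence incoherent. What your route buys: it replaces the paper's somewhat clever choice of probe state $V^\dagger\ketbra{z}V$ and the multiply-and-sum comparison with an elementary phase-averaging argument on two-level superpositions, and it makes the ``diagonal-in determines diagonal-out'' reformulation explicit where the paper leaves it implicit. What the paper's route buys: its probes are a basis state and a single rotated state, so the role of the unitary's action is more direct, and the same algebra pattern recurs in its other results. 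One small slip in your aside: evaluating at $\theta = 0$ and $\theta = \pi/2$ alone gives only $\mathrm{Re}[z] = -\mathrm{Im}[z]$, which does not force $z = 0$; since you have constancy for all $\theta$, use e.g.\ $\theta \in \{0, \pi\}$ and $\theta \in \{\pi/2, 3\pi/2\}$, or simply note that $\mathrm{Re}[e^{i\theta}z]$ equals $\abs{z}\cos(\theta + \arg z)$, which is non-constant unless $z = 0$. This is trivial to repair and does not affect the validity of the argument.
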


\begin{proof}[Proof of \cref{lem:exact}]
Suppose that $\mathcal{E}(\rho \otimes \ancillamixed) = U \rho U^\dagger$ for some unitary $U$. We seek to show that $U$ must be incoherent if the condition (\ref{eq:thm1condition}) is met. This condition implies that

\[\label{eq:rhocondition}
\Delta  \bigg (\mathcal{E}(\Delta(\rho \otimes \ancillamixed))  \bigg ) = \Delta  \bigg (U \rho U^\dagger \bigg ), \qquad \forall \rho.
\]

Let $U_{yx}:= \bra{y}U\ket{x}$, so that
\begin{align}
U\ket{x} &= \sum_y U_{yx} \ket{y} &&U^\dagger \ket{x} = \sum_y U_{xy}^* \ket{y} \\
\bra{x}U &= \sum_y U_{xy} \bra{y} &&\bra{x}U^\dagger = \sum_y U_{yx}^* \bra{y},
\end{align}
We can now use the facts that $\Delta(\rho \otimes \ancillamixed) = \Delta(\rho) \otimes \Delta(\ancillamixed)$ and $\Delta (\ketbra{x}{x}) = \ketbra{x}{x}$, and first input $\rho = \ketbra{x}$ in \cref{eq:rhocondition} to obtain
\begin{align}
      \Delta \bigg ( \mathcal{E}  ( \ketbra{x} \otimes \Delta(\ancillamixed)) \bigg ) &= \Delta \left( \sum_{y,z} U_{yx}U_{zx}^* \ketbra{y}{z}  \right )\\
      &= \sum_y \abs{U_{yx}}^2 \ketbra{y}
      \end{align}
Multiplying both sides by $\abs{U_{zx}}^2$ and summing over $x$ implies
      \begin{align}
           \sum_x \abs{U_{zx}}^2 \Delta \bigg ( \mathcal{E}   ( \ketbra{x} \otimes \Delta(\ancillamixed)) \bigg) &= \sum_{x,y} \abs{U_{zx}}^2  \abs{U_{yx}}^2 \ketbra{y} .\label{eq:firsteq}
\end{align}
Now focusing on $ \rho = U^\dagger\ketbra{z}U = \sum_{x,w} U_{zw} U_{zx}^*\ketbra{x}{w} $, we have $\Delta (\rho) = \sum_{x}\abs{U_{zx}}^2 \ketbra{x}{x} $. \cref{eq:rhocondition} then implies that
\begin{align}
    \sum_x \abs{U_{zx}}^2 \Delta \circ \mathcal{E} \bigg(  \ketbra{x} \otimes \Delta(\ancillamixed)\bigg ) &= \ketbra{z} . \label{eq:secondeq}
\end{align}
Comparing (\ref{eq:firsteq}) with (\ref{eq:secondeq}), as the left-hand sides are equal we see that 
\begin{align}
      \sum_{x,y} \abs{U_{zx}}^2  \abs{U_{yx}}^2 \ketbra{y} &= \ketbra{z}. 
\end{align}
      This can only be true if
\begin{align}
      \sum_{x} \abs{U_{zx}}^2  \abs{U_{yx}}^2  = 0 \quad \text{for $y\neq z$}.
      \end{align}
However as all terms are non-negative, we have the stronger implication that
      \begin{align}
          \abs{U_{zx}}  \abs{U_{yx}}  &= 0 \quad \text{for $y\neq z$, $\forall x$.}
\end{align}
This final equation implies that for all $x$, there can be at most one value of $y$ such that $U_{yx}\neq 0$. As $U$ is unitary, this implies that in each column there is exactly one non-zero entry, so $U$ must be incoherent. In particular, 
\[
U\ket{x} = \sum_y U_{yx} \ket{y} = U_{y'x} \ket{y'},
\]
for some $y'$ (depending on $x$). Thus in summary,  \cref{eq:rhocondition} implies that if the channel $\mathcal{E}_\ancillamixed$ is unitary then it must be incoherent, independent of $\ancillamixed$.
\end{proof}

To put this technical result into context, we now show that the channels proposed in \cref{ob:map} satisfy the condition in \cref{lem:exact}, and thus an arbitrary ancilla is not sufficient to elevate incoherent resources to computational universality.

\begin{lemma} \label{lem:commutes}
For any incoherent unitary $U$, the map $\rho \mapsto\text{Tr}_2 \left (U \rho U^\dagger \right )$ commutes with the dephasing map $\Delta$.
\end{lemma}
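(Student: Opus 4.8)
The plan is to prove the stronger statement that $\mathcal{E}(\rho) = \text{Tr}_2(U\rho U^\dagger)$ fully commutes with dephasing, $\Delta \circ \mathcal{E} = \mathcal{E} \circ \Delta$ (which immediately yields the weaker condition \eqref{eq:thm1condition} needed by \cref{lem:exact}, since $\Delta$ is idempotent). I would do this by factoring the argument into two elementary intertwining relations and then chaining them. Throughout I write $\Delta_{12}$ for the dephasing map on the joint space $\mathcal{H}_1 \otimes \mathcal{H}_2$ and $\Delta_1$ for the dephasing map on the output space $\mathcal{H}_1$, and I index the computational bases of $\mathcal{H}_1$ and $\mathcal{H}_2$ by $i$ and $a$ respectively.

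First I would show that conjugation by the (given incoherent) unitary $U$ commutes with dephasing on the joint space, i.e. $\Delta_{12}(U \rho U^\dagger) = U \Delta_{12}(\rho) U^\dagger$. Writing $U \ket{x} = e^{i\theta_x}\ket{\pi(x)}$ for a permutation $\pi$ of the joint computational basis, the point is that $U$ merely relabels basis vectors and attaches phases, while $\Delta_{12}(\cdot) = \sum_x \ketbra{x}(\cdot)\ketbra{x}$ is the projection onto the diagonal in that same basis. On a diagonal entry the phases cancel, so both sides send $\rho$ to $\sum_x \bra{x}\rho\ket{x}\,\ketbra{\pi(x)}$ (using that $\pi$ is a bijection to reindex). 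This is essentially the statement that incoherent unitaries send the diagonal to the diagonal in an invertible way.

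Second I would establish the intertwining relation between the partial trace and dephasing, $\Delta_1 \circ \text{Tr}_2 = \text{Tr}_2 \circ \Delta_{12}$. This is a one-line matrix-element check: both $\Delta_1(\text{Tr}_2(\rho))$ and $\text{Tr}_2(\Delta_{12}(\rho))$ evaluate to $\sum_i \bigl(\sum_a \bra{i,a}\rho\ket{i,a}\bigr)\,\ketbra{i}$, since tracing out system $2$ and then keeping the system-$1$ diagonal retains exactly the fully diagonal entries $\bra{i,a}\rho\ket{i,a}$, regardless of the order in which the two operations are applied.

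Finally I would chain the two relations:
\[
\Delta_1\bigl(\mathcal{E}(\rho)\bigr) = \Delta_1\bigl(\text{Tr}_2(U\rho U^\dagger)\bigr) = \text{Tr}_2\bigl(\Delta_{12}(U\rho U^\dagger)\bigr) = \text{Tr}_2\bigl(U\,\Delta_{12}(\rho)\,U^\dagger\bigr) = \mathcal{E}\bigl(\Delta(\rho)\bigr),
\]
which is the claimed commutation. I do not expect a genuine obstacle, as both auxiliary relations are elementary; the only thing requiring care is bookkeeping, namely keeping straight which dephasing acts on which space ($\Delta_1$ on the output versus $\Delta_{12}$ on the joint input) and tracking the permutation $\pi$ together with its phases across the two tensor factors. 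Getting the index labels consistent is the main place an error could slip in.
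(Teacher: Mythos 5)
Your proof is correct and follows essentially the same route as the paper's: first show that conjugation by an incoherent unitary commutes with the joint dephasing map, then combine this with the intertwining of dephasing and partial trace. The only difference is cosmetic—you spell out the relation $\Delta_1 \circ \text{Tr}_2 = \text{Tr}_2 \circ \Delta_{12}$ by a matrix-element check, whereas the paper simply asserts it as clear, and your diagonal-entry argument for the unitary step is a slightly tidier version of the paper's explicit triple-sum computation.
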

See \cref{eq:incoherent_unitary} for the definition of incoherent unitaries, and \cref{eq:dephasing} for the definition of the dephasing map.
\begin{proof}
First let us see that the dephasing map commutes with the action of any incoherent unitary, recall that these are of the form $U = \sum_{x=1}^d e^{i \theta_x} \ketbra{\pi(x)}{x} $ for some permutation $\pi$ (\cref{eq:incoherent_unitary}).

Then for any state $\rho$ we have

\begin{align}
     U \Delta ( \rho ) U^\dagger &= \left (   \sum_{y} e^{i \theta_y} \ketbra{\pi(y)}{y} \right )  \sum_x \ketbra{x} \rho   \ketbra{x} \left ( \sum_{z} e^{-i \theta_z} \ketbra{z}{\pi(z)} \right )  \\
    &=  \sum_{x, y, z} e^{i \theta_y} e^{-i \theta_z} ~  \ket{\pi(y)} \! \braket{y}{x} \! \bra{x}   \rho   \ket{x} \! \braket{x}{z} \! \bra{\pi(z)}  \\
    &=  \sum_{x} ~  \ketbra{\pi(x)}{x}    \rho     \ketbra{x}{\pi(x)}.  \label{eq:pi1}
\end{align}
Noting that we can also write the dephasing map as $ \Delta (\rho) = \sum_x \ketbra{\pi(x)} \rho \ketbra{\pi(x)} $ for any permutation $\pi$ leads to
\begin{align}
    \Delta (U \rho U^\dagger) &= \sum_x \ketbra{\pi(x)} \left (   \sum_{y} e^{i \theta_y} \ketbra{\pi(y)}{y} \rho  \sum_{z} e^{-i \theta_z} \ketbra{z}{\pi(z)} \right )\ketbra{\pi(x)} \\
    &=  \sum_{x, y, z} e^{i \theta_y} e^{-i \theta_z}  \ket{\pi(x)}\hspace{-2pt}\braket{\pi(x)}{\pi(y)}\hspace{-2pt}\bra{y} \rho   \ket{z}\hspace{-2pt}\braket{\pi(z)}{\pi(x)}  \hspace{-2pt}\bra{\pi(x)}\\
    &=  \sum_{x} ~  \ketbra{\pi(x)}{x}    \rho     \ketbra{x}{\pi(x)}. \label{eq:pi2}
\end{align}
Hence the equality of \cref{eq:pi1} and \cref{eq:pi2} (and as $\rho$ was arbitrary) show that $\Delta \circ U = U \circ \Delta$ for any incoherent $U$. It is also clear that the dephasing map commutes with the partial trace, from which the result follows.
\end{proof}

Since $\Delta^2 = \Delta$, we have that $  \mathcal{E} \circ \Delta = \Delta \circ \mathcal{E} \implies  \Delta \circ \mathcal{E} \circ \Delta = \Delta \circ \mathcal{E}$ for any channel $\mathcal{E}$, that is, commutation of $\mathcal{E}$ with $\Delta$ implies the condition imposed in \cref{lem:exact}.

Hence \cref{lem:commutes} and \cref{lem:exact} together show that given incoherent unitaries and classical control, encapsulated by the channel $\rho \mapsto \mathcal{E}(\rho \otimes \ancillamixed)$ (see the discussion in \cref{sec:framework}), cannot exactly implement any coherent unitary, even when supplemented with an arbitrary ancilla. This is in direct contrast to other situations, such as magic state injection.

This result is about exactly implementing a coherent unitary. It is natural to question whether this no-go result arises from demanding too much. One possible relaxation is to consider implementation of a coherent unitary with some non-zero probability: surprisingly we can still show that this is impossible. 

We can also show the same result for probablistic implementations. Recall from \cref{ob:map} (and surrounding text) that we represent these by convex combinations of normalised sub-channels

\[
    \mathcal{E}^x(\rho) = \alpha \text{Tr}_X \bigg ( \id \otimes \ketbra{x} U \rho \otimes \ancillamixed  U^\dagger \bigg ).
\]

where the normalisation factor $\alpha$ does not depend on the input $\rho$.

\begin{restatable}{lemma}{subchannelcommuteapprox}
\label{lem:subchannel_commutes_approx}
For incoherent $U$, the normalised sub-channel $\rho \mapsto \mathcal{E}^x (\rho)$ commutes with the dephasing map $\Delta$.
\end{restatable}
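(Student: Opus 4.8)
The plan is to pass to a Kraus representation of the subchannel and to show that the normalisation hypothesis (that $\alpha$ is independent of $\rho$) forces every Kraus operator to be \emph{submonomial} -- at most one nonzero entry in each row and in each column -- a property that guarantees commutation with $\Delta$. The only real content of the argument will be bridging the gap between a purely diagonal consequence of normalisation and the suppression of the \emph{off-diagonal} coherences of the ancilla that could otherwise break commutation.

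First I would fix any decomposition $\ancillamixed = \sum_k \lambda_k \ketbra{\phi_k}$ and, using the identity $\text{Tr}_2\big((\id\otimes\ketbra{x})M\big) = (\id\otimes\bra{x})M(\id\otimes\ket{x})$, rewrite the subchannel as
\[
\mathcal{E}^x(\rho) = \alpha \sum_k \lambda_k\, K_{x,k}\, \rho\, K_{x,k}^\dagger, \qquad K_{x,k} := (\id\otimes\bra{x})\,U\,(\id\otimes\ket{\phi_k}).
\]
Writing the incoherent unitary as $U = \sum_s e^{i\theta_s}\ketbra{\pi(s)}{s}$ with $s=(i,j)$ and $\pi(i,j) = (f(i,j),g(i,j))$, a short computation gives $K_{x,k}\ket{i} = \sum_{j\in S_x^{(i)}} (\phi_k)_j\, e^{i\theta_{ij}} \ket{f(i,j)}$, where $S_x^{(i)} := \{j : g(i,j)=x\}$. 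The structural fact to record here is injectivity of $\pi$: if $i\neq i'$ with $j\in S_x^{(i)}$ and $j'\in S_x^{(i')}$, then $\pi(i,j)$ and $\pi(i',j')$ both carry second register $x$, so their first registers $f(i,j)$ and $f(i',j')$ differ. Hence $K_{x,k}$ can fail to be submonomial only through a single fibre $S_x^{(i)}$ containing two or more indices.

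Next I would invoke normalisation. Since $\alpha$ does not depend on $\rho$, the functional $\text{Tr}(\mathcal{E}^x(\cdot))$ is both linear and constant, which forces $\sum_k\lambda_k K_{x,k}^\dagger K_{x,k}\propto\id$. Because $U^\dagger(\id\otimes\ketbra{x})U$ is a diagonal projector, each $K_{x,k}^\dagger K_{x,k}$ is in fact diagonal with $i$-th entry $\sum_{j\in S_x^{(i)}}|(\phi_k)_j|^2$, so proportionality to $\id$ says exactly that $\sum_{j\in S_x^{(i)}}\tau_{jj}$ is independent of $i$. I would then combine this with the counting identity $\sum_i |S_x^{(i)}| = \dim\mathcal{H}_1$ (the fibres partition $\pi^{-1}$ of the $\dim\mathcal{H}_1$ basis states with second register $x$): if some fibre $S_x^{(i_0)}$ had at least two elements, pigeonhole forces another fibre to be empty, so the common value is $0$, whence $\sum_{j\in S_x^{(i_0)}}\tau_{jj}=0$. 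By positivity of $\ancillamixed$ this kills $\tau_{jj}$, and therefore $(\phi_k)_j$, for every $j\in S_x^{(i_0)}$, so that $K_{x,k}\ket{i_0}=0$.

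Putting this together, each $K_{x,k}$ sends every basis vector $\ket{i}$ either to $0$ or to a scalar multiple of a single basis vector, with the nonzero images distinct across $i$ by injectivity of $\pi$; that is, each $K_{x,k}$ is submonomial. It is then routine to check that a channel assembled from such Kraus operators commutes with $\Delta$: for each $k$ the operator $K_{x,k}\Delta(\rho)K_{x,k}^\dagger$ is manifestly diagonal, while the off-diagonal entries of $K_{x,k}\rho K_{x,k}^\dagger$ live at positions indexed by distinct images and are removed by $\Delta$, giving $\Delta\circ\mathcal{E}^x = \mathcal{E}^x\circ\Delta$. The hard part, and the sole place positivity of the ancilla is indispensable, is precisely the pigeonhole step turning the diagonal fibre-weight condition into a statement about off-diagonal support; that normalisation cannot be dropped is already visible from the SWAP unitary, for which $\mathcal{E}^x(\rho)\propto\rho_{xx}\,\ancillamixed$ fails to commute with $\Delta$ whenever $\ancillamixed$ is coherent, and whose outcome probability $\rho_{xx}$ is not $\rho$-independent.
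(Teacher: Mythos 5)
Your proposal is correct, but it proves the lemma by a genuinely different route from the paper, and in fact establishes a slightly different reading of the statement. The paper's proof is a three-line computation that treats $\mathcal{E}^x$ as a map on the \emph{joint} input and shows commutation with the \emph{joint} dephasing map: dephasing commutes with the partial trace, with the diagonal projector $\id\otimes\ketbra{x}$, and with any incoherent unitary (the computation of \cref{lem:commutes}); neither the normalisation hypothesis nor positivity of the ancilla is used, and the resulting identity is exactly the form fed into \cref{lem:exact}, whose hypothesis dephases the ancilla as well. You instead prove commutation for the induced channel $\rho\mapsto\mathcal{E}^x(\rho\otimes\ancillamixed)$ with the ancilla held fixed (and possibly coherent), dephasing only the input register. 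That statement does not follow from the paper's joint identity, which only controls $\mathcal{E}^x(\Delta(\rho)\otimes\Delta(\ancillamixed))$, and as your SWAP example correctly shows it is false without the hypothesis that the outcome probability is input-independent. Your Kraus-level analysis is sound at every step: normalisation forces $\sum_k\lambda_k K_{x,k}^\dagger K_{x,k}\propto\id$; the fibres $S_x^{(i)}$ partition a set of cardinality $\dim\mathcal{H}_1$ because $\pi$ is a bijection; pigeonhole then makes the common fibre weight zero whenever some fibre has two or more elements, and positivity of $\ancillamixed$ kills the offending amplitudes, leaving submonomial Kraus operators, which manifestly commute with $\Delta$. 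What each approach buys: the paper's version is shorter and is the form literally required by \cref{lem:exact} and \cref{lem:0ton_approx}; your version does not literally satisfy those hypotheses, but it recovers the downstream theorems just as directly (a unitary channel commuting with $\Delta$ maps basis states to pure incoherent states, hence is an incoherent unitary, and the bound of \cref{lem:0ton_approx} goes through with $\ancillamixed$ in place of $\Delta(\ancillamixed)$), and it additionally pinpoints exactly where the normalisation assumption and the positivity of the ancilla enter -- structural information that is invisible in the paper's argument.
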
 
The working is very similar to that of \cref{lem:commutes}, and we give a more concise proof as follows.
\begin{proof}

\begin{align}
    (\Delta \circ \mathcal{E}^x) (\rho) &= \sum_i \ketbra{i} \text{Tr}_X \bigg ( \id \otimes \ketbra{x} U \rho U^\dagger \bigg ) \ketbra{i} \\
     &= \sum_j \text{Tr}_X \bigg ( \ketbra{j} \id \otimes \ketbra{x} U \rho U^\dagger \ketbra{j} \bigg ) \\
      &= \text{Tr}_X \bigg (  \id \otimes \ketbra{x} \sum_j \ketbra{j} U \rho U^\dagger \ketbra{j} \bigg )    \\
      &= \text{Tr}_X \bigg (  \id \otimes \ketbra{x} U \sum_j \ketbra{j}  \rho  \ketbra{j} U^\dagger \bigg )   \\
      &= (\mathcal{E}^x \circ \Delta )(\rho) 
\end{align}
\end{proof}

As any convex combination of such channels will also commute with the dephasing map, by \cref{lem:exact} we can see that any attempt to even probabilistically implement a coherent unitary exactly will fail. We can summarise the preceding with our first main result:

\begin{theorem} \label{thm:exact}
Given the ability to perform incoherent unitaries, computational basis measurements and classical control, it is impossible to implement any coherent unitary (e.g. Hadamard) exactly with any non-zero probability, even when supplemented with an arbitrary ancilla.
\end{theorem}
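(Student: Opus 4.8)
The plan is to assemble the machinery built up in the preceding lemmas, handling the deterministic and probabilistic cases by the same mechanism: reduce each to a channel that commutes with the dephasing map, and then invoke \cref{lem:exact}.

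\textbf{Deterministic case.} By \cref{ob:map}, the most general deterministic operation in the framework is $\mathcal{E}_{\ancillamixed}(\rho) = \text{Tr}_X(U(\rho\otimes\ancillamixed)U^\dagger)$, where $U$ is a generalised controlled unitary built from the free (incoherent) set. First I would invoke \cref{lem:controlled_incoherent} to conclude that $U$ is itself incoherent, and—using that SWAP is incoherent, as noted after that lemma—arrange the partial trace to act on the ancilla, so the joint map is $\mathcal{E}(\sigma) = \text{Tr}_2(U\sigma U^\dagger)$. By \cref{lem:commutes} this commutes with the dephasing map, $\mathcal{E}\circ\Delta = \Delta\circ\mathcal{E}$. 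Using idempotence $\Delta^2=\Delta$, this upgrades to $\Delta\circ\mathcal{E}\circ\Delta = \Delta\circ\Delta\circ\mathcal{E} = \Delta\circ\mathcal{E}$, which is precisely the hypothesis \cref{eq:thm1condition} of \cref{lem:exact}. Applying \cref{lem:exact} with the fixed ancilla $\ancillamixed$ then shows $\mathcal{E}_{\ancillamixed}$ cannot equal a coherent unitary channel, for any $\ancillamixed$.

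\textbf{Probabilistic case.} By \cref{ob:map}, a probabilistic operation is a convex combination of normalised subchannels $\mathcal{E}^x(\rho) = \alpha\,\text{Tr}_X((\id\otimes\ketbra{x})\,U(\rho\otimes\ancillamixed)U^\dagger)$, where input-independence of the normalisation $\alpha$ (discussed around \cref{eq:general_subchannel}) is exactly what makes a branch a bona fide channel. Implementing a coherent unitary $V$ with non-zero probability would mean one such normalised branch equals $V(\cdot)V^\dagger$ on all inputs. I would then apply \cref{lem:subchannel_commutes_approx}, which shows each normalised subchannel commutes with $\Delta$; since commutation with $\Delta$ is preserved under convex combinations by linearity, the relevant operation again satisfies \cref{eq:thm1condition}. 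A further appeal to \cref{lem:exact} rules out any such branch being a coherent unitary, contradicting the assumed probabilistic implementation of $V$.

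\textbf{Main obstacle.} The genuine work is already encapsulated in \cref{lem:exact}, so at this stage the task is essentially careful bookkeeping. The one point I would flag is the probabilistic reduction: one must verify that ``implements $V$ with non-zero probability'' corresponds to a \emph{single} normalised branch being \emph{exactly} the unitary channel $V(\cdot)V^\dagger$ for all inputs, rather than only up to an outcome- or input-dependent rescaling. This is precisely secured by the input-independence of $\alpha$ remarked after \cref{eq:general_subchannel}, which guarantees linearity of the normalised branch. Once that identification is made cleanly, the contradiction with \cref{lem:exact} is immediate in both the deterministic and probabilistic settings, completing the argument.
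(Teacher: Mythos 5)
Your proposal is correct and takes essentially the same route as the paper's own proof: reduce to the channel and subchannel forms of \cref{ob:map} (with $U$ incoherent via \cref{lem:controlled_incoherent}), invoke \cref{lem:commutes} and \cref{lem:subchannel_commutes_approx} to get commutation with $\Delta$, and apply \cref{lem:exact}. The extra points you flag --- the idempotence step $\Delta^2=\Delta$ upgrading commutation to condition \cref{eq:thm1condition}, closure under convex combinations, and the input-independence of the normalisation in the probabilistic branch --- are exactly the remarks the paper makes in the text surrounding its proof.
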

\begin{proof}
    As discussed in \cref{sec:framework}, the ability to perform incoherent unitaries, computational basis measurements and classical control is encapsulated by channels of the form $\rho \mapsto \text{Tr}_2(U\rho U^\dagger)$ for $U$ incoherent, or in the probabilistic case by convex combinations of subchannels $\rho \mapsto \text{Tr}_2(\id \otimes \ketbra{x} ~ U\rho U^\dagger)$. By \cref{lem:commutes} and \cref{lem:subchannel_commutes_approx}, these maps commute with the dephasing map $\Delta$. Hence we can apply \cref{lem:exact} to see that given access to the above operations, one can never implement any coherent unitary exactly.
\end{proof}

\subsubsection*{Approximate Implementation}

Having considered the exact and probabilistic cases, we now turn our attention to the approximate case, focusing on tensor products of Hadamard gates (as opposed to arbitrary coherent unitaries). Specifically, we seek lower bounds on the induced trace distance between the channels introduced in \cref{ob:map} and $n$ Hadamard gates. Our second main technical result achieves this goal as follows.

\begin{lemma}
    \label{lem:0ton_approx}
    Let $\mathcal{E} : \mathcal{S}(\mathcal{H}_1) \otimes  \mathcal{S}(\mathcal{H}_2) \rightarrow  \mathcal{S}(\mathcal{H}_1)$ be any channel that commutes with the dephasing map, i.e. 
\[
 \mathcal{E} \circ \Delta = \Delta \circ \mathcal{E}, 
\]
where $\Delta$ is the dephasing map defined in \cref{eq:dephasing}.
Define the channel $\mathcal{E}_\ancillamixed (\rho) := \mathcal{E}(\rho \otimes \ancillamixed)$ for an arbitrary state $\ancillamixed \in \mathcal{S}(\mathcal{H}_2) $. Let $\mathcal{D}$ denote the induced trace distance on quantum channels. Then for all states $\ancillamixed$, we have
\[
\mathcal{D}\bigg (\mathcal{E}_\ancillamixed ~,~ H^{\otimes n} \bigg) \geq 1 - \frac{1}{2^n}.
\]
\end{lemma}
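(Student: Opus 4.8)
The plan is to exploit contractivity of the trace distance under the dephasing map together with the commutation hypothesis $\mathcal{E}\circ\Delta = \Delta\circ\mathcal{E}$, and to feed the channel the family of maximally coherent inputs whose Hadamard images are computational basis states. Since the induced distance is a maximum over inputs, it suffices to track the whole family $\rho_x := H^{\otimes n}\ketbra{x}H^{\otimes n}$ for $x \in \{0,1\}^n$, for which the ideal output $H^{\otimes n}\rho_x H^{\otimes n} = \ketbra{x}$ is a point mass in the computational basis.

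First I would lower-bound each individual distance by dephasing both arguments and invoking contractivity,
\[
D\!\left(\mathcal{E}_\ancillamixed(\rho_x),\, \ketbra{x}\right) \geq D\!\left(\Delta\, \mathcal{E}_\ancillamixed(\rho_x),\, \Delta \ketbra{x}\right) = D\!\left(\Delta\, \mathcal{E}_\ancillamixed(\rho_x),\, \ketbra{x}\right),
\]
the last equality holding because $\ketbra{x}$ is already diagonal. The crucial computation is then to simplify the dephased output. Using $\Delta\circ\mathcal{E}=\mathcal{E}\circ\Delta$ (with $\Delta$ acting on all of $\mathcal{H}_1\otimes\mathcal{H}_2$ on the input side), the factorisation $\Delta(A\otimes B)=\Delta(A)\otimes\Delta(B)$, and the key fact that $\Delta(H^{\otimes n}\ketbra{x}H^{\otimes n}) = \id/2^n$ is \emph{independent of} $x$ (every amplitude of $H^{\otimes n}\ket{x}$ has modulus $2^{-n/2}$), I obtain
\[
\Delta\, \mathcal{E}_\ancillamixed(\rho_x) = \mathcal{E}\!\left(\tfrac{\id}{2^n} \otimes \Delta(\ancillamixed)\right) =: \sigma,
\]
a single incoherent state $\sigma$ whose diagonal distribution $(p_y)_y$ does not depend on $x$.

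Next I would evaluate the distance between the fixed diagonal state $\sigma$ and each basis point mass, which is simply the total variation distance of a distribution to a point mass: $D(\sigma, \ketbra{x}) = 1 - p_x$. Maximising over $x$ and using that the induced distance $\mathcal{D}$ dominates each term gives $\mathcal{D}(\mathcal{E}_\ancillamixed, H^{\otimes n}) \geq \max_x (1 - p_x) = 1 - \min_x p_x$. The finishing step is a pigeonhole argument: since $\sigma$ is a state, $\sum_{x \in \{0,1\}^n} p_x = 1$ is a sum of $2^n$ nonnegative terms, so $\min_x p_x \leq 2^{-n}$, which yields the claimed bound $1 - 2^{-n}$.

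The conceptual crux — the step I expect to carry the weight — is recognising that all $2^n$ ideal point-mass targets share the \emph{same} dephased actual output $\sigma$; this is precisely what forces the relevant diagonal weights to compete for a total budget of $1$ and produces the factor $2^{-n}$. The remaining work is routine, though I would double-check the orientation of the bound (an equivalent route is the measurement bound $D(\mathcal{E}_\ancillamixed(\rho_x),\ketbra{x}) \geq 1 - \bra{x}\mathcal{E}_\ancillamixed(\rho_x)\ket{x}$ via the projector $P=\ketbra{x}$, with $\bra{x}\mathcal{E}_\ancillamixed(\rho_x)\ket{x}$ equal to the diagonal entry $p_x$ of $\sigma$) and verify that the commutation relation is indeed applied with $\Delta$ acting on the full input space.
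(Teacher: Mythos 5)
Your proof is correct and follows essentially the same route as the paper's: dephase both outputs, use contractivity and the commutation hypothesis to collapse all conjugate-basis inputs $H^{\otimes n}\ketbra{x}H^{\otimes n}$ to the single incoherent state $\sigma = \mathcal{E}\big(\tfrac{\id}{2^n}\otimes\Delta(\ancillamixed)\big)$, and then pigeonhole over the $2^n$ diagonal entries. The only (harmless) difference is that you observe $\sigma$ is already diagonal, whereas the paper applies contractivity under $\Delta$ a second time to reach the same conclusion.
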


\begin{proof}[Proof of \cref{lem:0ton_approx}.]

Let $C_n = \{\ket{x} ~:~ x \in \{0,1\}^n \}$ denote the set of computational basis states, and $B_n = \{\ket{x} ~:~ x \in \{+,-\}^n \}$ denote the set of conjugate basis states. Note that $H^{\otimes n}$ maps bijectively between $C_n$ and $B_n$. We then have

\begin{align}
\mathcal{D}\bigg (\mathcal{E}_\ancillamixed~,~ H^{\otimes n} \bigg) &:= \max_\rho D\bigg ( \mathcal{E} (\rho \otimes \ancillamixed)~,~ H^{\otimes n}\rho H^{\otimes n} \bigg ) \\
    &\geq  \max_\rho D\bigg ( \Delta \circ \mathcal{E} (\rho \otimes \ancillamixed)~,~ \Delta (H^{\otimes n}\rho H^{\otimes n}) \bigg) \label{eq:distance_contractivity}\\
    &=  \max_\rho D \bigg(\mathcal{E} (\Delta(\rho) \otimes \Delta(\ancillamixed))~,~ \Delta (H^{\otimes n}\rho H^{\otimes n})\bigg) \\
    &\geq  \max_{\ket{\phi} \in B_n} D \bigg(\mathcal{E} (\Delta(\ketbra{\phi}) \otimes \Delta(\ancillamixed))~,~ \Delta (H^{\otimes n}\ketbra{\phi} H^{\otimes n})\bigg) \\
    &=  \max_{\ket{\psi} \in C_n} D \bigg(\mathcal{E} (\frac{\id}{2^n} \otimes \Delta(\ancillamixed))~,~ \ketbra{\psi} )\bigg),
\end{align}
where we used the contractivity of the trace distance under quantum channels, and the condition on $\mathcal{E}$ commuting with the dephasing map from the theorem statement.

Now define $\sigma := \mathcal{E} (\frac{\id}{2^n} \otimes \Delta(\ancillamixed))$, and note that $\Delta (\ketbra{\psi}) = \ketbra{\psi}$ for all $\ket{\psi} \in C_n$. Again using contractivity of the trace distance we can then write

\begin{align}
     \mathcal{D}\bigg (\mathcal{E}_\ancillamixed~,~ H^{\otimes n} \bigg) & \geq \max_{\ket{\psi} \in C_n} D \bigg( \sigma ~,~ \ketbra{\psi} )\bigg) \\
      &\geq  \max_{\ket{\psi} \in C_n} D 
      \bigg( \Delta ( \sigma ) ~,~ \Delta ( \ketbra{\psi}) )\bigg ) \\
      &=  \max_{\ket{\psi} \in C_n} D 
      \bigg( \Delta ( \sigma ) ~,~ \ketbra{\psi} )\bigg ) \\
      &\geq \max_{\ket{\psi} \in C_n} \bigg (1 - \bra{\psi} \Delta (\sigma) \ket{\psi} \bigg ) \\
       &\geq 1 - \frac{1}{2^n}
\end{align}
The last line can be seen by observing that for any incoherent state, the maximum diagonal entry must be at least $\frac{1}{2^n}$.

\end{proof}

This bound is displayed in \cref{tab:results}, and for the case of a single Hadamard the bound becomes $\mathcal{D} ( \mathcal{E}_\ancillamixed, H ) \geq \frac{1}{2}$. Operationally, this means that using an optimal (unentangled) input to the channels, we can distinguish them with high probability given multiple uses. Note also that as the induced trace distance is a lower bound on the diamond distance \cite{watrous2018theory}, we also have the same lower bound on the diamond distance between the above channels. We can also see that this bound is tight, as for example taking the channel $\mathcal{E}$ to be the map that always outputs the maximally mixed state (which commutes with $\Delta$), we have that $\mathcal{D}(\mathcal{E}_\tau, H^{\otimes n} ) = 1 - \frac{1}{2^n}$ which matches the bound.

Furthermore, we can observe that the above analysis also applies to the probabilistic case: using the fact that the corresponding normalised subchannel  (\ref{eq:general_subchannel}) commutes with $\Delta$ (\cref{lem:subchannel_commutes_approx}) we see that \cref{lem:0ton_approx} also applies. The conclusion is that even approximate, probabilistic implementation of Hadamards is not possible, which is our second main result. Recall (from \cref{subsec:defns}) that we say we can implement a channel $\mathcal{E}$ \textit{$\epsilon$-approximately} if we can implement a channel $\mathcal{V}$ with induced trace distance $\mathcal{D}(\mathcal{E}, \mathcal{V}) \leq \epsilon$.

\begin{theorem} \label{thm:approx}
Given the ability to perform incoherent unitaries, computational basis measurements and classical control, it is impossible to implement $n$ Hadamards $\epsilon$-approximately with non-zero probability, for $0 \leq \epsilon <  1 - 2^{-n} $. In particular, it is impossible to implement a single Hadamard $\epsilon$-approximately with non-zero probability, for $0 \leq \epsilon < \frac{1}{2}$.
\end{theorem}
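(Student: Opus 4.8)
The plan is to reduce the approximate, probabilistic statement to the two lemmas already established, namely \cref{lem:0ton_approx} (the quantitative lower bound for channels commuting with $\Delta$) and \cref{lem:subchannel_commutes_approx} (which shows the relevant normalised subchannels commute with $\Delta$). First I would recall from \cref{ob:map} and the discussion in \cref{sec:framework} that any operation available within our framework (incoherent unitaries, computational basis measurements, classical control, and an arbitrary ancilla $\ancillamixed$) is captured either by a channel $\mathcal{E}_\ancillamixed(\rho) = \text{Tr}_2(U \rho \otimes \ancillamixed \, U^\dagger)$ for $U$ incoherent, or, in the probabilistic case, by a convex combination of normalised subchannels $\mathcal{E}^x$ of the form in \cref{eq:general_subchannel}. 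The key observation is that \cref{lem:subchannel_commutes_approx} guarantees each such subchannel commutes with the dephasing map, and since a convex combination of maps commuting with $\Delta$ again commutes with $\Delta$ (by linearity of $\Delta$), the effective simulating channel falls squarely within the hypotheses of \cref{lem:0ton_approx}.

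Second, I would invoke \cref{lem:0ton_approx} directly: for any such channel $\mathcal{E}_\ancillamixed$ and any ancilla $\ancillamixed$, we have $\mathcal{D}(\mathcal{E}_\ancillamixed, H^{\otimes n}) \geq 1 - 2^{-n}$. The definition of $\epsilon$-approximate implementation (from \cref{subsec:defns}) requires $\mathcal{D}(\mathcal{E}_\ancillamixed, H^{\otimes n}) \leq \epsilon$, so the existence of an $\epsilon$-approximation with $\epsilon < 1 - 2^{-n}$ would contradict the lower bound. The single-Hadamard case ($n=1$) then follows by substituting $n=1$, giving the threshold $\tfrac{1}{2}$.

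The only genuinely substantive point to handle carefully is the probabilistic reduction: I must argue that it suffices to bound the distance for each normalised subchannel individually, or equivalently that the convex combination itself commutes with $\Delta$ so that \cref{lem:0ton_approx} applies to the combined channel. The cleanest route is the latter: since $\Delta$ is linear and each $\mathcal{E}^x$ commutes with $\Delta$ by \cref{lem:subchannel_commutes_approx}, any convex mixture $\sum_x p_x \mathcal{E}^x$ satisfies $\Delta \circ (\sum_x p_x \mathcal{E}^x) = \sum_x p_x (\Delta \circ \mathcal{E}^x) = \sum_x p_x (\mathcal{E}^x \circ \Delta) = (\sum_x p_x \mathcal{E}^x) \circ \Delta$, so the combined channel commutes with $\Delta$ and the lemma applies verbatim. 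I expect this step to be the main (though modest) obstacle, as it requires care that the normalisation in each subchannel is input-independent so that the convex combination is genuinely a channel, a point already ensured by the discussion following \cref{ob:map}. Everything else is a direct appeal to the preceding lemmas.
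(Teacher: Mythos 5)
Your proposal is correct and follows essentially the same route as the paper's own proof: both reduce the statement to the channel/subchannel form from \cref{ob:map}, use \cref{lem:commutes} and \cref{lem:subchannel_commutes_approx} to establish commutation with the dephasing map (including for convex combinations), and then invoke the bound of \cref{lem:0ton_approx}. Your explicit verification that a convex mixture of $\Delta$-commuting subchannels again commutes with $\Delta$ is a point the paper only states implicitly, so your write-up is, if anything, slightly more careful on that step.
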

\begin{proof}
The proof follows a similar structure to that of \cref{thm:exact}.
   We can describe channels arising from the stated operations by channels of the form $\rho \mapsto \text{Tr}_2(U\rho U^\dagger)$ for $U$ incoherent, or in the probabilistic case by convex combinations of subchannels $\rho \mapsto \text{Tr}_2(\id \otimes \ketbra{x} ~ U\rho U^\dagger)$ (see \cref{sec:framework}). These maps commute with the dephasing map $\Delta$ by \cref{lem:commutes} and \cref{lem:subchannel_commutes_approx}. Then \cref{lem:0ton_approx} implies that given access to the above operations, one can never implement a channel that has induced trace distance with $H^{\otimes n}$ of strictly less than $ 1 - 2^{-n} $.
\end{proof}

\subsection{Incoherent resources, $k$ Hadamards and an ancilla cannot implement $n>k$ Hadamards}

The above showed that incoherent resources supplemented with an arbitrary state is not sufficient to implement even a single Hadamard gate, even approximately and probabilistically. A further generalisation is to consider the case of having the ability to perform up to $k$ Hadamard gates, incoherent unitaries, classical control and access to an ancillary state. Could it be possible here to implement $n$ Hadamard gates, with $n$ strictly greater than $k$? It would be very striking if this were the case, as then by repeating this process (and having many copies of the ancilla) one could implement an arbitrarily high number of Hadamard gates when originally only given the ability to perform a fixed number of them. We will show that this is indeed not the case, which demonstrates the robustness of our previous results in this direction. Our result holds when also considering general controlled Hadamard gates, which is stronger than considering single qubit Hadamard gates as these are a special case of our generalised controlled operations in \cref{def:controlledU}. Note also that the previous section is a special case of the scenario here (with $k=0$), however the approach and proof technique here differs substantially. 

Let us first see a simple example of the case $1 \mapsto 2$ to illustrate the general argument to follow. For example, one could ask about the existence of circuits of the form as in \cref{fig:1->2 hadamards}, for all two-qubit inputs $\ket{\psi}$, fixed ancilla $\ancillaket$, incoherent unitaries $U$, $V$ and $W$, and where $H_i$ denotes a Hadamard on the $i$-th qubit.

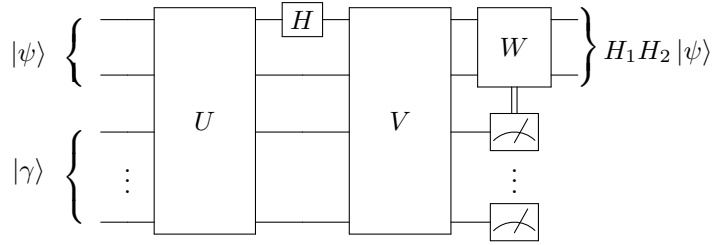
\begin{figure}[htbp!]
    \centering
    $$
  \Qcircuit @C=1em @R=1em {
 &&\qw& \multigate{4}{\quad U \quad} & \gate{H} & \multigate{4}{\quad V \quad} &\multigate{1}{ ~ W ~ } & \qw &&  \\
 &&\qw & \ghost{\quad U \quad} & \qw &  \ghost{\quad V \quad} & \ghost{~W~} & \qw\inputgroupv{1}{2}{0em}{1.3em}{\ket{\psi}}\gategroup{1}{8}{2}{8}{.8em}{\}} &&&  \raisebox{2em}{\ensuremath{H_1 H_2 \ket{\psi}}} \\ 
 &&\qw & \ghost{\quad U \quad} & \qw &  \ghost{\quad V \quad} &\meter \cwx[-1] \\ 
 && \qvdots & && &  \qvdots \\ 
 &&\qw & \ghost{\quad U \quad} & \qw &  \ghost{\quad V \quad} & \meter  \inputgroupv{3}{5}{0em}{1.5em}{\ancillaket} \\
}
$$
\caption{A possible circuit for using a single Hadamard gate and incoherent unitaries $U$, $V$ and $W$ to implement two Hadamard gates. We rule out the existence of such a circuit in this work. }
\label{fig:1->2 hadamards}
\end{figure}

Let us now be more precise, and argue by contradiction. Suppose that the above task was possible. This would mean that the following equation would hold:
\[
 H_1 H_2 \ket{\psi}\ket{\ancillapure_\psi} = V H_1 U \ket{\psi}\ancillaket,
\]
 for some state $\ancillaket$, a set of states $\ket{\ancillapure_\psi}$ that could depend on $\ket{\psi}$, incoherent unitaries $U$ and $V$, and where  $H_i$ denotes a Hadamard on the $i$-th qubit. To see how the above diagram can be written in this form, recall from  \cref{sec:framework} that we can replace the last gate $W$ by its quantum controlled version (which will be incoherent by \cref{lem:controlled_incoherent}) and absorb it into $V$.

Now expanding $\ket{\psi}$ in a basis allows us to see that $\ket{\ancillapure_\psi}$ must actually be independent of $\ket{\psi}$. One may already expect this, as the first register contains all the information of the pure state $\ket{\psi}$, for the ancilla system to contain some information of $\ket{\psi}$ would seem to imply a form of cloning. Indeed for $\ket{\psi} = \sum_x \alpha_x \ket{x}$ we have
\[
H_1 H_2 \ket{\psi}\ket{\ancillapure_\psi} = \sum_x \alpha_x V H_1 U \ket{x}\ancillaket = \sum_x \alpha_x H_1 H_2 \ket{x}\ket{\ancillapure_x}.
\]
Tracing out the first register then gives
\[
\ketbra{\ancillapure_\psi} = \sum_x \abs{\alpha_x}^2 \ketbra{\ancillapure_x},
\]
from which the independence on $\ket{\psi}$ follows: a sum of rank 1 operators with non-negative coefficients can only be equal to a rank 1 operator if all the operators are proportional. We can now write 
\[
V H_1 U \ket{\psi}\ancillaket = H_1 H_2 \ket{\psi}\ket{\ancillapure'} \label{eq:example_ancilla'}
\]
for some state $\ket{\ancillapure'}$. Let the coherence rank of $\ancillaket$ and $\ket{\ancillapure'}$ be $r$ and $r'$ respectively. Then taking $\ket{\psi}$ to be $\ket{00}$ and $\ket{++}$, \cref{eq:example_ancilla'} implies the following two equations
\begin{align}
    V H_1 U \ket{00}\ancillaket &= \ket{++}\ket{\ancillapure'} \\
    V H_1 U \ket{++}\ancillaket &= \ket{00}\ket{\ancillapure'}.
\end{align}
By recalling that incoherent unitaries cannot change the coherence rank, and a single Hadamard can at most double and at least halve the coherence rank, these equations respectively imply that

\begin{align}
    4r' &\in  [\frac{r}{2}, 2r] \\
    r' &\in [2r, 8r] \quad \implies 4r' \in [8r, 32r],
\end{align}
which is a contradiction. We thus see that it must be impossible for a single Hadamard gate, incoherent unitaries and classical control to implement two Hadamard gates, even given access to the arbitrary state $\ancillaket$.

We can generalise and formalise this argument to show the impossibility of $k$ Hadamards and incoherent resources implementing $n>k$ Hadamards. The argument will proceed in the same two steps as above: first showing that the ancillary register must be left in a state that is independent of the input $\rho$, and secondly use the resource content of these states (coherence rank) to derive a contradiction.

\begin{lemma} \label{lem:ancilla_indep}
Suppose that there exists some bipartite unitary $U$, local unitary $V$, and state $\ancillaket$ such that for all $\rho$ we have
\[
\text{Tr}_2\left ( U \rho \otimes \ancillaproj U^\dagger \right ) = V \rho V^\dagger.\label{eq:traceequalsV}
\]
Then 
\[
\text{Tr}_1\left ( U \rho \otimes \ancillaproj U^\dagger \right ) = \ketbra{\ancillapure '}
\]
for some fixed pure state $\ket{\ancillapure '}$ that is independent of $\rho$.
\end{lemma}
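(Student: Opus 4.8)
The plan is to run, in abstract form, the two-step argument already sketched for the $1 \mapsto 2$ case. First I would fix a pure input $\ket{\psi}$ and set $\ket{\Psi_\psi} := U(\ket{\psi} \otimes \ancillaket)$, so that $\ket{\Psi_\psi}$ is a pure bipartite state whose marginal on the first register is, by the hypothesis \cref{eq:traceequalsV}, the pure state $V\ketbra{\psi}V^\dagger$. A pure bipartite state whose reduced state on one subsystem is pure must have Schmidt rank one, i.e.\ be a product state; hence $\ket{\Psi_\psi} = V\ket{\psi} \otimes \ket{\gamma'_\psi}$ for some normalised $\ket{\gamma'_\psi}$ (any leftover phase being absorbed into $\ket{\gamma'_\psi}$). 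In particular $\text{Tr}_1(U(\ketbra{\psi} \otimes \ancillaproj)U^\dagger) = \ketbra{\gamma'_\psi}$ is pure for every pure input, and it remains only to show this output is independent of $\ket{\psi}$.

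Next I would exploit linearity of $U$. Writing $\ket{\psi} = \sum_x \alpha_x \ket{x}$ in the computational basis and applying the first step to each basis vector gives $\ket{\Psi_x} = V\ket{x} \otimes \ket{\gamma'_x}$, so
\[
\ket{\Psi_\psi} = \sum_x \alpha_x \ket{\Psi_x} = \sum_x \alpha_x \, V\ket{x} \otimes \ket{\gamma'_x}.
\]
Because $V$ is unitary, the vectors $\{V\ket{x}\}$ are orthonormal, so tracing out the first register kills the cross terms and yields
\[
\ketbra{\gamma'_\psi} = \text{Tr}_1\!\left(\ketbra{\Psi_\psi}\right) = \sum_x \abs{\alpha_x}^2 \ketbra{\gamma'_x}.
\]

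The final step is the rank argument. The right-hand side above is a convex combination (since $\sum_x \abs{\alpha_x}^2 = 1$) of the pure states $\ketbra{\gamma'_x}$, while the left-hand side is pure, i.e.\ rank one; a mixture of pure states can be rank one only if all components carrying nonzero weight coincide. Choosing $\ket{\psi}$ with full support (all $\alpha_x \neq 0$) then forces $\ketbra{\gamma'_x} = \ketbra{\gamma'}$ to be a single fixed projector for all $x$, and substituting back gives $\ketbra{\gamma'_\psi} = \ketbra{\gamma'}$ for every pure $\ket{\psi}$. Linearity of the channel $\rho \mapsto \text{Tr}_1(U(\rho \otimes \ancillaproj)U^\dagger)$ extends this to arbitrary mixed $\rho$, which is the claim. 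I expect the only real subtlety to be the clean statement of the two structural facts being invoked — that a pure global state with a pure marginal is a product state, and that a probabilistic mixture of pure states is itself pure only when those pure states all agree — both standard but worth making explicit; working with the projectors $\ketbra{\gamma'_\psi}$ rather than the kets $\ket{\gamma'_\psi}$ is the one place where the phase ambiguity must be handled with care.
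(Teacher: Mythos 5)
Your proposal is correct and follows essentially the same route as the paper's own proof: purity of the first marginal forces $U(\ket{\psi}\otimes\ancillaket)$ to be a product state, linearity over a computational-basis expansion plus tracing out the first register yields $\ketbra{\gamma'_\psi} = \sum_x \abs{\alpha_x}^2 \ketbra{\gamma'_x}$, and the rank-one argument then fixes a single $\ket{\gamma'}$, extended to mixed $\rho$ by linearity. If anything, your explicit choice of a full-support $\ket{\psi}$ to force all the $\ketbra{\gamma'_x}$ to coincide is slightly more careful than the paper's terser treatment of that step.
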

 
\begin{proof}
    In particular, for $\rho = \ketbra{\psi}$ a pure state, \cref{eq:traceequalsV} becomes
    \[
\text{Tr}_2 \left ( U \ketbra{\psi} \otimes \ancillaproj U^\dagger \right ) = V \ketbra{\psi} V^\dagger.
\]
Since tracing out the second system of $U\ket{\psi}\ancillaket$ results in a pure state for all $\ket{\psi}$, the total state $U\ket{\psi}\ancillaket$ must be a pure product state. So in summary we must have
\[
 U \bigg (\ket{\psi} \ancillaket \bigg ) = V \otimes \id \bigg ( \ket{\psi} \ket{\ancillapure_\psi} \bigg  )
\]
for some pure state $\ket{\ancillapure_\psi}$ that a priori could depend on $\ket{\psi}$.

Now as before, write $\ket{\psi}=\sum_x \alpha_x \ket{x}$. Then we have
\begin{align}
      V \otimes \id \ket{\psi} \ket{\ancillapure_\psi}  = \sum_x \alpha_x ~ U \ket{x}\ancillaket 
= \sum_x \alpha_x ~ V \otimes \id  \ket{x}\ket{\ancillapure_x}
\end{align}
Multiplying by $V^\dagger \otimes \id$ now implies that
\[
 \sum_x \alpha_x \ket{x} \ket{\ancillapure_\psi}  = \sum_x \alpha_x ~ \ket{x}\ket{\ancillapure_x}.
\]

Tracing out the first system gives
\begin{align}
\ketbra{\ancillapure_\psi} = \sum_x \abs{\alpha_x}^2 \ketbra{\ancillapure_x},
\end{align}
which in turn implies
\[
\ketbra{\ancillapure_\psi} =  \ketbra{\ancillapure_x} =  \ketbra{\ancillapure_{x'}} \quad \quad \forall x, x', \psi 
\]
and so $\ket{\ancillapure_\psi}$ must be independent of $\ket{\psi}$.

Hence we have shown that for all pure states $\ket{\psi}$
    \[
\text{Tr}_1\left ( U \ketbra{\psi} \otimes \ancillaproj U^\dagger \right ) = \ketbra{\ancillapure '}
\]
for some state $\ket{\ancillapure '}$ independent of $\ket{\psi}$.
To complete the proof, write an arbitrary mixed state as $\rho = \sum_k p_k \ketbra{\psi_k}$ with $\sum_k p_k = 1$, and observe that
\begin{align}
\text{Tr}_1\left ( U \rho \otimes \ancillaproj U^\dagger \right ) &= \sum_k p_k \text{Tr}_1\left ( U \ketbra{\psi_k} \otimes \ancillaproj U^\dagger \right ) \\
&= \sum_k p_k \ketbra{\ancillapure '} \\
&= \ketbra{\ancillapure '}
\end{align}
\end{proof}

We can use this lemma to explicitly rule out the possibility of $k$ Hadamards and incoherent resources exactly implementing $n$ Hadamards. To do this, we will consider unitaries of the following form:
\[
U=U_k V_k \dots U_1 V_1 U_0,
\]
where $U_i$ are incoherent unitaries and $V_i$ are Hadamards or controlled Hadamards (in the general sense of \cref{def:controlledU}). As discussed in \cref{sec:framework}, this describes any operation involving incoherent unitaries, classical control and $k$ Hadamard gates -- see \cref{eq:U_iV_i} and surrounding text. We can now state our next result.

\begin{lemma} \label{lem:coherencerank}
Let $U=U_k V_k \dots U_1 V_1 U_0$ be a product of unitaries, alternating between incoherent unitaries $U_i$ and controlled-Hadamards $V_i$. If we have that
\[
\text{Tr}_2\left ( U \rho \otimes \ancillaproj U^\dagger \right ) = H^{\otimes n} \rho H^{\otimes n} \qquad \forall \rho, \label{eq:kton_exact}
\]
then we must have that $n \leq k$.
\end{lemma}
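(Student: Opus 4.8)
The plan is to combine the ancilla-independence result \cref{lem:ancilla_indep} with the coherence-rank bounds of \cref{lem:cohrankprod}, mirroring the worked $1 \mapsto 2$ example above. First I would invoke \cref{lem:ancilla_indep} with $V = H^{\otimes n}$: the hypothesis \cref{eq:kton_exact} says precisely that $\text{Tr}_2(U\,\rho\otimes\ancillaproj\,U^\dagger) = H^{\otimes n}\rho H^{\otimes n}$, so the lemma supplies a fixed pure state $\ket{\ancillapure '}$, independent of the input, together with the factorisation
\[
U\big(\ket{\psi}\ancillaket\big) = \big(H^{\otimes n}\ket{\psi}\big)\ket{\ancillapure '} \qquad \forall \ket{\psi}.
\]
This is the conceptual crux: it converts a statement about a channel into an exact statement about how $U$ acts on product vectors, so that the discrete coherence rank becomes applicable to both sides.

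Writing $r := \chi(\ancillaket)$ and $r' := \chi(\ket{\ancillapure '})$ for the coherence ranks of the two ancillas (each at least $1$), I would then evaluate the factorisation on the two extremal inputs $\ket{0^n}$ and $\ket{+^n}$, using multiplicativity $\chi(\ket{\psi}\otimes\ket{\phi}) = \chi(\ket{\psi})\chi(\ket{\phi})$ together with the fact that $H^{\otimes n}$ swaps $\ket{0^n}$ and $\ket{+^n}$, which have coherence ranks $1$ and $2^n$ respectively. For $\ket{\psi}=\ket{0^n}$ the input $\ket{0^n}\ancillaket$ has coherence rank $r$ while the output $\ket{+^n}\ket{\ancillapure '}$ has coherence rank $2^n r'$; the upper bound of \cref{lem:cohrankprod} then gives $2^n r' \le 2^k r$. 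For $\ket{\psi}=\ket{+^n}$ the input $\ket{+^n}\ancillaket$ has coherence rank $2^n r$ while the output $\ket{0^n}\ket{\ancillapure '}$ has coherence rank $r'$; the lower bound of \cref{lem:cohrankprod} gives $r' \ge 2^{n-k} r$.

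Combining the two inequalities yields $2^{n-k} r \le r' \le 2^{k-n} r$, and since $r \ge 1$ this forces $2^{n-k} \le 2^{k-n}$, i.e. $n \le k$, which is the claim. I expect the only genuine obstacle to be the first step: one must be sure that \cref{lem:ancilla_indep} really licenses the clean product factorisation with a \emph{single} fixed $\ket{\ancillapure '}$, rather than a $\ket{\psi}$-dependent ancilla or a merely mixed marginal. Everything after that is routine bookkeeping with the multiplicative coherence rank and the halving/doubling bounds. A minor point worth checking is that the bounds of \cref{lem:cohrankprod} are being applied to $U$ acting on the full system-plus-ancilla pure vectors, which is legitimate since those bounds hold for an arbitrary pure state of the combined system.
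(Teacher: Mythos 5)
Your proposal is correct and follows essentially the same route as the paper's own proof: invoking \cref{lem:ancilla_indep} to obtain the factorisation $U(\ket{\psi}\ancillaket) = (H^{\otimes n}\ket{\psi})\ket{\ancillapure'}$ with a fixed $\ket{\ancillapure'}$, then comparing coherence ranks on the inputs $\ket{0^n}$ and $\ket{+^n}$ via the bounds of \cref{lem:cohrankprod} to force $2^{n-k}r \leq 2^{k-n}r$. The point you flag as the "only genuine obstacle" is handled in the paper exactly as you anticipate: purity of both marginals makes $U\ket{\psi}\ancillaket$ a product state, and \cref{lem:ancilla_indep} supplies the $\psi$-independence of the ancilla factor.
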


\begin{proof}

Consider the case where $\rho = \ketbra{\psi}$ is a pure state. Using \cref{lem:ancilla_indep}, we can then write \cref{eq:kton_exact} as
\[
U  \ket{\psi} \ancillaket   =  H^{\otimes n} \otimes \id \ket{\psi} \otimes \ket{\ancillapure '} \label{eq:pure}
\]
for some fixed state $\ket{\ancillapure '}$.

Let $\ancillaket$ and $\ket{\ancillapure '}$ have coherence ranks $r$ and $r'$ respectively (recall \cref{def:coherencerank} for the definition of the coherence rank $\chi$), and consider the cases of $\ket{\psi}$ being equal to $\ket{0^n}$ and $\ket{+^n}$: 
\begin{align}
    U\ket{0^n}\ancillaket &= \ket{+^n}\ket{\ancillapure '}\\
    U\ket{+^n}\ancillaket  &= \ket{0^n}\ket{\ancillapure '}.
\end{align}
By comparing the coherence ranks of both sides of the above equations, we find that $ \chi(U\ket{0^n}\ancillaket) = 2^n r'$ and $ \chi(U\ket{+^n}\ancillaket) = r'$. Hence
\[
\frac{\chi(U\ket{0^n}\ancillaket)}{2^n} =  \chi(U\ket{+^n}\ancillaket). \label{eq:coherenceranks0+}
\]
We now directly apply \cref{lem:cohrankprod}, which provides lower and upper bounds on the coherence rank of a state after applying a sequence of incoherent unitaries and controlled Hadamards. We obtain
\begin{align}
    \chi (U\ket{0^n}\ancillaket) &\leq 2^k \chi(\ket{0^n}\ancillaket) = 2^k r \\
    \chi (U\ket{+^n}\ancillaket) &\geq 2^{-k} \chi(\ket{+^n}\ancillaket)  = 2^{n-k}r.
\end{align}
Combining these with \cref{eq:coherenceranks0+} leads to
\[
2^{n-k}r  \leq  2^{k-n}r,
\]
which as the coherence rank $r>0$ implies that
\[
n \leq k
\]
as claimed.
\end{proof}

We summarise the implication of the preceding technical result to place it in context with the rest of the paper.

\begin{theorem} \label{thm:kton}
Given the ability to perform incoherent unitaries and $k$ Hadamards, computational basis measurements and classical control, then even with access to an abritray ancillary state, it is impossible to implement $n$ Hadamards exactly, for $n>k$.
\end{theorem}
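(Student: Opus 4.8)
The plan is to mirror the logic of \cref{thm:exact}: translate the operational description into the canonical channel form of \cref{ob:map}, argue that the associated unitary has exactly the structure required by \cref{lem:coherencerank}, and then invoke that lemma to force $n \le k$, contradicting the hypothesis $n>k$.

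First I would reduce the operational setup to a single unitary acting on the input together with a fixed ancilla. By \cref{ob:map}, any deterministic process built from incoherent unitaries, $k$ Hadamards, computational basis measurements and classical control can be written as $\rho \mapsto \text{Tr}_X\left( U(\rho \otimes \ancillamixed)U^\dagger \right)$; purifying the ancilla (as in the remark following \cref{ob:map}) lets me take it to be a pure state $\ancillaket$, and since SWAP is incoherent I may take the partial trace to act on the ancillary register, giving $\rho \mapsto \text{Tr}_2\left( U(\rho \otimes \ancillaproj)U^\dagger \right)$.

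The crux is the structure of $U$. Using the principle of deferred measurement (\cref{fig:deferred_measurement}) I push all intermediate computational basis measurements to the end, replacing each classically controlled gate by its quantum controlled version. By \cref{lem:controlled_incoherent}, a controlled incoherent unitary is again incoherent, so every incoherent operation (whether freely applied or adaptively triggered) can be absorbed into an incoherent factor, while each of the $k$ Hadamards becomes a Hadamard or controlled Hadamard. Crucially, this procedure does not increase the number of Hadamard-type gates beyond $k$ -- adaptivity can only turn a Hadamard into a controlled Hadamard, which is precisely the object tolerated by \cref{def:controlledU}. Hence $U$ can be written in the alternating form $U = U_k V_k \cdots U_1 V_1 U_0$ of \cref{eq:U_iV_i}, with $U_i$ incoherent and $V_i$ (controlled) Hadamards.

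The remainder is immediate. Exactly implementing $n$ Hadamards means precisely that $\text{Tr}_2\left( U(\rho \otimes \ancillaproj)U^\dagger \right) = H^{\otimes n}\rho H^{\otimes n}$ for all $\rho$, which is exactly the hypothesis \cref{eq:kton_exact} of \cref{lem:coherencerank}. That lemma then forces $n \le k$, contradicting $n>k$, so no such process can exist. I expect the main obstacle to lie entirely in the bookkeeping of the previous step: one must be sure that deferring measurements and converting classical to quantum control genuinely preserves the Hadamard count at $k$ and introduces no additional coherence-generating gate, since the whole argument rests on faithfully reducing the adaptive, measurement-laden operational model to the clean unitary form demanded by \cref{lem:coherencerank}.
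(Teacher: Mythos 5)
Your proposal is correct and takes essentially the same route as the paper's own proof: both reduce the adaptive operational model to the canonical form $\rho \mapsto \text{Tr}_2\left( U(\rho \otimes \ancillaproj)U^\dagger \right)$ with $U = U_k V_k \cdots U_1 V_1 U_0$ alternating between incoherent unitaries and controlled Hadamards, and then invoke \cref{lem:coherencerank} to force $n \leq k$. The paper's proof is simply a terser version of yours, delegating to \cref{sec:framework} the bookkeeping (deferred measurement, \cref{lem:controlled_incoherent}, purification, Hadamard count preservation) that you spell out explicitly.
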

 \begin{proof}
         As discussed above and in \cref{sec:framework}, any operation involving incoherent incoherent unitaries, $k$ Hadamards, computational basis measurements, classical control and access to some ancillary state $\ancillaket$ can be written as $\rho \mapsto \text{Tr}_2(U \rho \otimes \ancillaproj U^\dagger)$, where  $U=U_k V_k \dots U_1 V_1 U_0$ alternates between incoherent unitaries $U_i$ and controlled Hadamards $V_i$ (controlled in the general sense of \cref{def:controlledU}). \cref{lem:coherencerank} then directly yields the result.
 \end{proof}

\section{Discussion and Open Questions}

We have introduced a unifying framework for approaches to quantum computation involving operations on some fixed, resourceful state. After studying the role of coherence in this context, we showed that some coherence must be present in the operations. By motivating a general form of the possible channels, we have been able to provide a series of no-go results for incoherent resources being able to implement a unitary channel with increased cohering power, even given an arbitrary ancillary state. This shows that unlike e.g. magic, this resource cannot be placed inside a resourceful state and retrieved; it really has to be in the operations, showing a marked difference to other resources for quantum computation. We now detail some avenues for future work.

\subsection{Extending Our Results}
Firstly, it would be of value to extend and sharpen our specific results. For example, we did not include the case of using $k$ Hadamards, incoherent resources and an arbitrary ancilla to implement $n>k$ Hadamards \textit{approximately} or \textit{probabilistically}, which involves considering subchannels as in \cref{eq:general_subchannel}. We leave these questions to ongoing and future work, in order to complete the picture as presented in \cref{tab:results}.

In the case of using incoherent resources to simulate $n$ Hadamards, we were content to show that approximate implementation is not possible, and we have left the optimality of our bound open. Specifically, we were able to exploit the fact that the corresponding channels commuted with the dephasing map. For channels that use a non-zero amount of coherence (e.g. using $k<n$ Hadamards), one possibility would be to find a similar characterisation, for example, commutation with some channel that only allows a small amount of coherence through. 

It would also be interesting to understand if there is any advantage at all to using an ancillary state. We proved a much weaker lower bound (\cref{lem:0ton_approx}) compared to not using an ancilla at all (\cref{lem:ancilla_indep}), perhaps there is still some advantage to be had here. In this work, we were primarily concerned with providing lower bounds in order to show no-go results, but do there exist interesting upper bounds? That is, perhaps one could show that using an ancillary state allows for a strictly better approximation to a coherent unitary, compared to the case of no ancilla.

\begin{problem}
Improve or show optimality of the bounds presented in this work, and find lower bounds on implementing $n$ Hadamards using $k$ Hadamards, incoherent unitaries, classical control, computational basis measurements, and an arbitrary ancilla.
\end{problem}

\subsection{Links with MBQC } \label{subsec:discussmbqc}

Our results concern the circuit model, so it would also be interesting to extend our results to the measurement based scenario. Let us first review some previous works, before commenting on how one could formulate and engage with analogous questions here.

The universality of states for one-way MBQC has been studied in \cite{van2007fundamentals, mora2010universal}, taking LOCC as the free operations. In particular, in \cite{van2007fundamentals} they distinguish between four types of universality depending upon whether the input and output considered are classical ({\bf{C}}) or quantum ({\bf{Q}}). In this language, a device is considered to be {\bf{QQ}}-universal if it can implement any unitary operation $U$, and {\bf{CQ}}-universal if it can prepare any pure quantum state $\ket{\psi}$. This scenarios are natural when respectively considering the circuit model and the measurement based model (for which the local operations exclude the notion of a quantum input). The authors also discuss the subtleties of efficiency, and of approximate and probabilistic universality in \cite{mora2010universal}.

As {\bf{QQ}}-universality (quantum inputs and outputs) is concerned with simulation of a unitary channel, it possess a parallel with the gadget-based approach considered in this work. On the other hand, as {\bf{CQ}}-universality is the appropriate notion for MBQC, there are some intricacies involved, for example one must take into account the dimension of the ancillary state for any meaningful definition. To see this, recall that a $\epsilon$-net is a set of pure quantum states such that \textit{any} pure quantum state is within distance $\epsilon$ of some state in the net. Hence, one could encode such an $\epsilon$-net into an ancilla (i.e. take the tensor product of all states in the net). Then by simply tracing out all but one of the subsystems, one could prepare any pure state to within $\epsilon$ distance. However the size of any $\epsilon$-net increases rapidly with the dimension of the systems \cite{hayden2004randomizing, montanaro2019quantum}, hence this approach is highly impractical and more refined ideas would be needed.

One could extend the definitions in \cite{van2007fundamentals, mora2010universal} to more general free operations by posing the following question: given some set of allowed operations (e.g. LOCC) does there exists a family of resourceful ancillary states that yield $\mathbf{CQ}$-universality? We provide an example of such a definition here, inspired by the aformentioned works.

\begin{definition}
A family of sets of operations $\{\mathcal{V}_n\}$ is \textit{$\epsilon$-approximate, efficiently}  $\mathbf{CQ}$-universal, with respect to a distance measure $D$ if:
\begin{align*} 
\text{there exists:\qquad}  &\text{a family of states $\{ \ket{\ancillapure(n)} \}$, where each $\ket{\ancillapure(n)}$ is on at most poly$(n)$ qubits} \\
\text{such that:\qquad} &\text{for every family of states $\{ \ket{\psi_n} \in (\mathbbm{C}^{ 2})^{\otimes n}$ \}  obtainable by a uniform } \\ &\quad \text{ family of quantum circuits of depth at most poly$(n)$ } \\
\text{there exist:\qquad} &\text{maps $\mathcal{E}_n \in \mathcal{V}_n$}\\
\text{such that:\qquad} &\text{$D \bigg (\mathcal{E}_n( \ketbra{\ancillapure(n)}), \ketbra{\psi_n} \bigg ) \leq \epsilon \quad \forall n$}.
\end{align*}
\end{definition}

One could also consider the probabilistic case, see \cite{mora2010universal}. For example, the operations LOCC are universal under this definition, as the cluster states serve as the resource state. Similarly, due to the results in \cite{takeuchi2019quantum}, the ability to only perform local Hadamard gates (with adaptivity and computational basis measurements) are universal using hypergraph states.

Hence a natural extension of our work to the MBQC framework could be to consider if for \textit{incoherent} LOCC operations there exists a family of resourceful states such that the pair is approximately, efficiently universal. More specifically, one could ask if there exists a family of resource states such that one can achieve efficient, universal quantum computation using only computational basis measurements (at least two measurement bases may at first seem to be necessary \cite{takeuchi2019quantum}, however note that some adaptivity is still possible in the order of systems measured). Furthermore, what would the analogous version of the $k \mapsto n$ question be? Given the ability to perform only $k$ Hadamards (or $k$ X measurements), can one perform universal quantum computation? Again this may translate to the existence of some fixed resourceful state, from which any state could be prepared.

\begin{problem}
    Is efficient, universal measurement based quantum computation possible with only incoherent resources (for example, using only $Z$ measurements)?
\end{problem}

The above points also raise interesting questions about the relationship between coherence and quantum incompatibility \cite{guhne2021incompatible, heinosaari2016invitation}. The latter  refers to the fact that not all measurements can be simultaneously performed in quantum theory. For projective measurements, the natural condition is whether the corresponding observables commute (as then a common eigenbasis exists to measure in). For more general POVM measurements, the prevailing notion is to ask for the existence of a so-called \textit{parent} measurement, from which the outcomes of all other measurements can be post-processed\footnote{Equivalently, one can consider the commutativity of the Naimark dilation of the measurements.} \cite{guhne2021incompatible}.

Within our framework, we related the ability to perform in different measurement bases to the unitary that maps between the bases (i.e. in the Heisenberg picture). This could indicate that these resources are equivalent in some way, and perhaps that for any model of universal quantum computation \textit{either} coherence must be present in the operations, or some form of incompatibility must be present in the measurements.
 
\begin{problem}
    Are coherence and measurement incompatibility computationally related?
\end{problem}

We also remark that due to existing Hadamard gadgets \cite{heyfron2018efficient, jozsa2013classical, Beaudrap2020FastAE},  the ability to measure in the $X$ basis is equivalent to the ability to implement the Hadamard, given the ability to perform incoherent unitaries freely and access to ancillas.

\subsection{Resource Theories}

Finally, our analysis raises some interesting questions for general resource theories \cite{chitambar2019quantum}, see \cref{app:resourcecoherence} for some background.

For example, \cite{diaz2018using} studies using maximally incoherent operations (MIOs) to implement arbitrary channels, in terms of the resourcefulness (e.g. coherence rank) of supplementary ancillas. MIOs are exactly the channels that map the set of incoherent states to themselves. It would be interesting to study if this could give rise to a novel model of quantum computation, using MIOs acting on resourceful coherent states.  It was also shown in \cite{lami2019generic} that the coherence distillation capabilities of strictly incoherent operations (SIO) and physically incoherent operations (PIO) are very limited.

\begin{problem}
    Where else can the ``cut'' be placed? Are there interesting new models of computation?
\end{problem}

 Furthermore, one way of seeing how our result went through is that for the resource theory of coherence, the quantum controlled free unitaries are also free. This is not the case for e.g. magic, as an $S$ gate is Clifford, but a controlled $S$ gate is not Clifford, or for LOCC (e.g. CNOT). This leads to a natural question: 
\begin{problem}
    Are there other resource theories for which taking quantum control of the free operations remains free?
\end{problem}

Our results also hint at a general trade-off between resource generating power and unitarity. Consider a free set of states $\mathcal{F}$ and a resource quantifier $Q$, and define the resource generating power of a channel $\mathcal{V}$ as $\max_{\rho \in \mathcal{F}} Q( \mathcal{V}(\rho))$. Suppose a channel is of the form
\[
\mathcal{E}(\rho) = \text{Tr}_2 \left (U \rho \otimes \ancillamixed U^\dagger \right )
\]
and $U$ has resource generating power $\alpha$.  If the channel $\mathcal{E}$ has resource generating power strictly greater than $\alpha$, intuitively this might suggest that $U$ is swapping in some of the resource contained in $\ancillamixed$, and hence $\mathcal{E}$ must be compromising on being unitary. Clearly the total resource content of $\mathcal{E}$ should be somehow upper bounded by the sum of that of $U$ and the state $\ancillamixed$. But in order for $\mathcal{E}$ to be unitary, perhaps it cannot use any of the resource contained in $\ancillamixed$. 
See \cite{takagi2020universal} for related work in this direction. The authors provide quantitative relations between resource content, implementation accuracy, and the dimension of the ancillary system, which they show diverges as the implementation accuracy goes to zero.

A similar setup is also considered in \cite{chiribella2021fundamental}. In particular, the authors consider a channel involving free unitaries acting on a resourceful state in order to implement a resourceful channel. They prove a lower bound on the resource content of the ancillary state as a function of the resource content of the target channel, and also apply this result in the context of coherence.

\begin{problem}
    Are there trade-offs between unitarity and resource generating power?
\end{problem}

In light of this, we note that the style of channels considered in this work seems to hint at a potential new class of resourceful operations. We essentially consider resourceless channels with access to an arbitrarily resourceful state, which does not neatly fit into existing resource theoretic frameworks (see \cref{rem:pio_channels} in \cref{app:resourcecoherence}).

\subsection{Concluding Remarks}

Whist progress has been made in understanding the components required to achieve a super-classical speedup, such as entanglement and magic, there are many exciting research avenues open to explore. The subfield of quantum resource theories has had relatively little intersection with topics in quantum computation, and there may be much to be gain from approaches that attempt to unify the different models of computation, such as gadget or measurement based. It is our hope that through studying characteristic features of quantum theory (such as coherence) on the level of states, channels and measurements, one may hope to gain a more complete understanding of the power of quantum computation.

\subsection*{Data Access Statement}
There was no data generated as part of this research.

\section*{Acknowledgements}
We are extremely grateful to Andreas Winter for pointing out that the previous \cref{lem:0ton_approx} could be significantly improved, and for kindly letting us use the new version of the Lemma and his proof.

We also thank members of the Bristol Quantum Information Theory group for helpful discussions. BDMJ acknowledges support from UK EPSRC (EP/S023607/1) . PS is a CIFAR Azrieli Global Scholar in the Quantum Information Science Program, and also acknowledges support from a Royal Society University Research Fellowship (UHQT/NFQI). NL gratefully acknowledges support from the UK Engineering and Physical Sciences Research Council through Grants No. EP/R043957/1, No. EP/S005021/1, and No. EP/T001062/1.

\addcontentsline{toc}{section}{References}
\bibliographystyle{unsrtnat}
\bibliography{references}

\appendix

\section{Resource Theories and Coherence} \label{app:resourcecoherence}

Quantum resource theories \cite{chitambar2019quantum} are flourishing as an active area of research. The primary goal is to consider unifying principles across different aspects of quantum mechanics that are quintessentially `quantum'. Specific examples include entanglement \cite{horodecki2009quantum}, coherence \cite{streltsov2017colloquium}, magic \cite{howard2017application, seddon2019quantifying}, and incompatibility \cite{guhne2021incompatible}. There are multiple axiomatic approaches: one can either start with some well-motivated free set of states and define the free channels as those preserving this set, or start with operationally motivated free operations (e.g. LOCC) and define the free states as those which can be generated using free operations alone.

For coherence, the starting point is to fix some particular basis $\{ \ket{x} \}$ as ``free'', and refer to this basis as \textit{incoherent}. These basis states can be thought of as easy to prepare, and in our work we consider them as computational basis states. An incoherent pure state is then equal to a single one of these basis states, and superpositions or coherent states are considered resourceful.

Formally, an arbitrary mixed state $\rho$ is called \textit{incoherent} with respect to the basis $\{ \ket{x} \}$ if it can be written as
\[
\rho = \sum_x p_x \ketbra{x},
\]
for some probabilities $p_x$, i.e. it is diagonal in this basis. Conceptually this consists of all the states that can be written as probabilistic mixtures of computational basis states, with no superposition present. Note that the maximally mixed state $\frac{\id}{d}$ is an example of such an incoherent state (with $p_x = \tfrac{1}{d} \quad \forall ~ x)$. We refer to the set of incoherent states as $\mathcal{I}$.

A unitary $U$ is \textit{incoherent} relative to the basis $\{ \ket{x} \}_{x=1}^d$ if it can be written as
\[
U = \sum_{x=1}^d e^{i \theta_x} \ketbra{\pi(x)}{x} \label{eq:incoherent_unitary}
\]
for some string of $d$ real numbers $\theta_x$ and some permutation $\pi$ on $d$ elements. In particular, incoherent unitaries map a computational basis state to another computational basis state, possibly multiplied by some phase. This definition also implies that $U\rho U^\dagger \in \mathcal{I}$ if $\rho \in \mathcal{I}$.

\begin{example}
Examples of incoherent unitaries include the Pauli matrices, the phase and $T$ gates, CNOT, SWAP, and the Toffoli gate. Examples of unitaries that are coherent (i.e. not incoherent, able to generate coherence) include the Hadamard gate, the Fourier transform, and $X$ rotations $e^{i\theta X}$ for $\theta \notin \{n \pi ~ ; ~ n \in \mathbbm{Z} \}$.
\end{example}

\begin{remark}
Note that if a unitary cannot create any superpositions, then it must be of the form
\[
U = \sum_x \alpha_x \ketbra{\pi(x)}{x}
\]
for some complex numbers $\alpha_x$. However for this to be unitary, we must have that $\abs{\alpha_x} = 1$ for all $x$. Hence if a unitary is not of the form \cref{eq:incoherent_unitary}, then it must necessarily map at least one computational basis state to a superposition (linear combination) of at least two computational basis states (i.e. it cannot change the magnitude of a computational basis state).

\end{remark}

For a fixed basis $\ket{x}$, the dephasing map is defined as 
\[
\Delta (\rho) := \sum_x \ketbra{x} \rho \ketbra{x}
\]
this has the effect of removing the off-diagonal elements on a density matrix, and is a valid quantum channel (it is trace-preserving and completely positive).

There are many different approaches to defining a free set of operations in this resource theory, see \cite{chitambar2016comparison, streltsov2017colloquium} for summaries. We review several of them here. \textit{Maximally Incoherent Operations} (MIO) map incoherent states to other incoherent states, namely $\mathcal{E}$ is a MIO if $\mathcal{E} (\mathcal{I}) \subseteq\mathcal{I} $.

\begin{lemma}
A channel $\Omega$ maps incoherent states to incoherent states (i.e. is MIO) if and only if $\Delta \circ \Omega \circ \Delta = \Omega \circ \Delta$.
\end{lemma}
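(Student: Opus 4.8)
The statement to prove is a characterization of Maximally Incoherent Operations (MIO): a channel $\Omega$ satisfies $\Omega(\mathcal{I}) \subseteq \mathcal{I}$ if and only if $\Delta \circ \Omega \circ \Delta = \Omega \circ \Delta$. The plan is to prove both directions by exploiting the two defining features of the dephasing map: that it projects any state onto the incoherent set (so $\Delta(\rho) \in \mathcal{I}$ always), and that it acts as the identity exactly on incoherent states (so $\Delta(\sigma) = \sigma \iff \sigma \in \mathcal{I}$). These two facts, together with $\Delta^2 = \Delta$, are really all the machinery needed.

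For the forward direction, I would assume $\Omega$ is MIO and aim to show $\Delta \circ \Omega \circ \Delta = \Omega \circ \Delta$. The key observation is that for any input state $\rho$, the intermediate state $\Delta(\rho)$ is incoherent. Since $\Omega$ is MIO, it maps this incoherent state to another incoherent state, so $\Omega(\Delta(\rho)) \in \mathcal{I}$. But $\Delta$ acts trivially on incoherent states, hence $\Delta(\Omega(\Delta(\rho))) = \Omega(\Delta(\rho))$. As $\rho$ was arbitrary, this gives the operator equation $\Delta \circ \Omega \circ \Delta = \Omega \circ \Delta$ directly.

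For the reverse direction, I would assume $\Delta \circ \Omega \circ \Delta = \Omega \circ \Delta$ and show that $\Omega$ maps incoherent states to incoherent states. Take any incoherent state $\sigma \in \mathcal{I}$. Since $\sigma$ is incoherent we have $\Delta(\sigma) = \sigma$, so substituting into the hypothesis yields $\Delta(\Omega(\sigma)) = \Delta(\Omega(\Delta(\sigma))) = \Omega(\Delta(\sigma)) = \Omega(\sigma)$. The statement $\Delta(\Omega(\sigma)) = \Omega(\sigma)$ says precisely that $\Omega(\sigma)$ is a fixed point of the dephasing map, and the fixed points of $\Delta$ are exactly the incoherent states. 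Therefore $\Omega(\sigma) \in \mathcal{I}$, establishing that $\Omega$ is MIO.

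I do not anticipate a serious obstacle here, as the result is essentially a restatement of the two elementary properties of $\Delta$ (idempotence and the correspondence between its fixed-point set and $\mathcal{I}$). The only point requiring mild care is the reverse direction, where one must justify that $\Delta(\Omega(\sigma)) = \Omega(\sigma)$ implies $\Omega(\sigma) \in \mathcal{I}$; this follows from the fact that a state is diagonal in the chosen basis if and only if it is unchanged by the removal of its off-diagonal entries, which is immediate from the definition of $\Delta$ in \cref{eq:dephasing}.
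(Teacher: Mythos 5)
Your proposal is correct and follows essentially the same argument as the paper: the forward direction uses that $\Omega$ maps the incoherent state $\Delta(\rho)$ to an incoherent state, on which $\Delta$ acts trivially, and the reverse direction substitutes an incoherent input into the hypothesis to conclude $\Omega(\sigma)$ is fixed by (equivalently, lies in the image of) $\Delta$ and is thus incoherent. No gaps; the two proofs differ only in phrasing.
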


\begin{proof}
Note that for all $\rho \in \mathcal{I}$ we have $\Delta(\rho) = \rho$. If $\Omega$ maps incoherent states to incoherent states, then we must have
$\Omega (\Delta (\rho)) = \Delta ( \Omega (\Delta (\rho)) )$ for all $\rho$, which implies $\Delta \circ \Omega \circ \Delta = \Omega \circ \Delta$.

To show the other direction, assume that $\Delta \circ \Omega \circ \Delta = \Omega \circ \Delta$. Then for $\rho$ incoherent, we have $\Omega (\rho) = \Omega (\Delta (\rho)) =\Delta ( \Omega (\Delta (\rho)) ) $, which is incoherent.
\end{proof}

The set of \textit{incoherent operations} (IO) is defined as the set of quantum channels $\mathcal{E}$ which admit a Kraus decomposition $\mathcal{E} (\rho) = \sum_\lambda K_\lambda ~ \rho ~ K_\lambda^\dagger$ such that $\frac{K_\lambda \rho K_\lambda^\dagger}{\text{Tr}(K_\lambda \rho K_\lambda^\dagger)} \in \mathcal{I}$ for all $\rho \in \mathcal{I}$. This definition means that it is not possible to generate coherence even probabilistically given access to the quantum instrument defined by $\{ K_\lambda \}$. This definition is equivalent to being able to write each $K_\lambda$ in the form $ \sum_x \alpha_x \ketbra{\pi(x)}{x}$ where the coefficients $\alpha_x$ can be arbitrary complex numbers. If each $K_\lambda^\dagger$ can also be written this way, the corresponding operations are referred to as \textit{strictly incoherent operations} (SIO). We also mention \textit{physically incoherent operations} (PIO), which are channels that can be realised via performing a global incoherent unitary on the input state and some incoherent ancilla, followed by an incoherent measurement and classical processing \cite{chitamber2016critical}. Finally, \textit{dephasing-covariant incoherent operations} (DIO) are channels $\mathcal{E}$ which commute with the dephasing map: $\mathcal{E} \circ \Delta = \Delta \circ \mathcal{E}$. We have the following inclusions \cite{chitamber2016critical}
\[
PIO \subseteq SIO \subseteq DIO \subseteq MIO.
\]
\begin{remark} \label{rem:pio_channels} 
   We are considering channels of the following form, for $U$ incoherent,
    \[
    \mathcal{E}^x(\rho) = \text{Tr}_X \bigg ( \id \otimes \ketbra{x} U \rho U^\dagger \bigg ).
    \]
     These fall within the class $PIO$, but note that channels such as $\rho \mapsto \mathcal{E}^x(\rho \otimes \ancillamixed)$ do not, as we can SWAP in the ancillary system (which in our framework could be arbitrarily resourceful).
\end{remark}

\end{document}